\documentclass[twoside,11pt]{article}
\usepackage{amsmath}
\usepackage{bbm}
\usepackage{graphicx}
\usepackage{algorithm} 
\usepackage{algorithmic}
\usepackage{subcaption}
\usepackage{multirow}

\usepackage[utf8]{inputenc} 
\usepackage[T1]{fontenc}    

\usepackage{booktabs}       

\usepackage{nicefrac}       
\usepackage{microtype}      
\usepackage{xcolor}         

\usepackage{jmlr2e}

\usepackage{lastpage}

\firstpageno{1}

\begin{document}

\title{Learning to Assign Prediction Tasks to Agents with Capacity Constraints}

\author{\name Shang Wu \email shangw13@uci.edu \\
       \addr Department of Computer Science\\
       University of California, Irvine\\
       Irvine, CA 92697, USA
       \AND
       \name Saatvik Kher \email sukher@uci.edu \\
       \addr Department of Computer Science\\
       University of California, Irvine\\
       Irvine, CA 92697, USA
       \AND
       \name Padhraic Smyth \email smyth@ics.uci.edu \\
       \addr Department of Computer Science\\
       University of California, Irvine\\
       Irvine, CA 92697, USA}

\maketitle

\begin{abstract}
We address the problem of learning to assign prediction tasks to one agent from a set of available human or AI agents. In particular, we focus on the sequential learning of agent expertise and assignment policies where each agent is constrained to handle a fraction of tasks. We provide a general theoretical characterization of this problem in terms of agent capacities, differences in agent expertise, and task context. We then develop a framework of sequential explore-exploit policy-learning algorithms that seek to maximize overall performance. Experimental results over a variety of tabular, image, and text prediction tasks demonstrate systematic gains from our policy-learning algorithms relative to non-contextual baselines across different types of agents, including LLMs and humans.  
\end{abstract}

\begin{keywords}
  Human-AI collaboration; Assignment policy; Online learning
\end{keywords}

\section{Introduction}
In many real-world settings, tasks arrive sequentially and need to be allocated to agents, where the agents  (AI or human)  may differ in their expertise. An example is in the context of medical radiology, where the set of agents could include an AI model and multiple human radiologists with different experiences, each available for tasks such as providing a diagnostic class label for a radiology image. 
For many types of tasks, human expertise is often quite specific \citep{ericsson1993role, chase1973perception}. For example, clinicians tend to perform well on familiar types of medical cases but do not generalize equally across all cases \citep{eva2005every,norman2007non}. AI models also can differ in terms of their expertise, for example, in large language model (LLM) routing where the focus is on determining which LLM is best-suited to answering different queries  \citep{srivatsa2024harnessing,song2025irt}. Further, in the context both human and AI expertise, the value of human-AI complementarity is widely acknowledged \citep{gonzalez2026toward}, across applications such as medical diagnosis \citep{zoller2025human}, image classification \citep{steyvers2022bayesian},  and content moderation \citep{lai2022human}. These findings suggest that different types of prediction agents can have localized expertise, and that effective task assignment should account for heterogeneous, context-dependent performance.

Another component of many real-world task assignment problems is the presence of {\bf capacity constraints} on how much workload an individual agent can handle, i.e., what fraction of tasks in the long-run they are assigned. These are distinct from cost constraints: capacity constraints impose agent-specific workload limits, rather than only limiting aggregate spending. For example, in the radiology example earlier, assigning many or all tasks to the most senior/experienced human radiologist is typically not a practical option (e.g., due to human fatigue or burnout \citep{berlin2000liability, alzoubi2024moderating}). Similarly, assigning all tasks to an AI agent (and none to humans) may be sub-optimal in terms of leveraging relevant human expertise and may be undesirable from a safety and employee satisfaction \citep{salikutluk2024evaluation,zhang2022you,shneiderman2020human}. 

In this context, we study online sequential task assignment for agents with heterogeneous and unknown expertise with capacity constraints on the fraction of tasks they can handle. Much of the prior work in areas such as learning to defer or LLM routing assumes that agent expertise can be learned offline \citep{mozannar2020consistent, claure2020multi, chen2023frugalgpt}. This approach can fail under the common situation of distribution shift, where AI models or humans face tasks unlike their training data or past experience. This leads to a joint problem: the assignment policy must learn agent expertise online while respecting capacity constraints and optimizing overall performance.

We address this problem by proposing an online task allocation framework that learns context-dependent agent expertise while enforcing long-run capacity constraints. We model the problem as a contextual multi-armed bandit, where tasks arrive sequentially and the assignment policy is updated from observed rewards.  Across simulated agents, real human-annotation data, machine learning classifiers, and LLMs, we show that contextual assignment outperforms non-contextual assignment.\footnote{In this paper, we use ``non-contextual assignment'' and ``random assignment'' interchangeably.} More broadly, our framework highlights the value of jointly modeling latent expertise and capacity constraints in sequential decision-making environments, such as human-AI collaboration and large-scale organizational systems.
Our work makes the following contributions:
\begin{itemize}
    \item We provide a theoretical framework for task assignment with capacity-constrained agents, characterizing optimal assignment policies in terms of both capacity constraints and context-dependent differences in agent expertise.  
    \item We conduct a series of experiments across settings with human annotators, machine learning classifiers, and  LLMs, showing systematic gains for contextual capacity-constrained policies over non-contextual baselines.\footnote{We will release the code, experiment scripts, and datasets in the final version of the paper.}
\end{itemize}

\section{Problem Setting}
Below we define the assignment problem and characterize the structure of optimal policies in an oracle setting where agents' context-dependent expertise is assumed known. This allows us to isolate how contextual assignment and capacity constraints shape the optimal allocation rule. 
Later in Section~\ref{sec:learning_policy}, we address the practical problem of online learning of expertise.

\subsection{Tasks, Agents, and Rewards}
\label{subsec:task_agent_reward}
There are $A$ prediction agents indexed by $a \in \{1,\dots,A\}$, where agents can be machine learning models or humans. We assume that each prediction task needs to be assigned to a single agent to make a prediction.  
Prediction tasks arrive sequentially, where each task is characterized by an observable {\bf context} $x_t  \in \mathcal{X} \subseteq \mathbb{R}^d$,  $t=1,\ldots,T$, and contexts are drawn i.i.d.\ from some unknown distribution $P_X$. 
If assigned a task, a prediction agent $a$ (a model or a human) generates a prediction $\hat{y}_{a,t}$. In what follows, we will focus on classification tasks and agents, where $\hat{y}_{a,t}$ is from a set of $K$ possible labels, but the framework we discuss can be directly extended to other prediction problems such as regression. The agent's prediction $\hat{y}_{a,t}$ can be based not only on the context $x_t$ but on features beyond the context (e.g., an agent could use all pixels in an image as input, whereas the context could be some lower-dimensional representation of the pixel information).     
 
Assigning task t to agent $a$ yields a stochastic reward $r_{t,a} \sim P_a(\cdot \mid x_t)$, where $P_a(\cdot \mid x)$ denotes the agent-specific reward distribution conditioned on context $x_t$. Rewards reflect the prediction quality of the agent. A natural choice for reward in a classification context (and one we will focus on in this paper) is the accuracy of agent $a$'s prediction for task $t$, i.e., $r_{t,a} = \mathbbm{1}\{\hat{y}_{a,t} = y_t\}$.
Here, $y_t$ is the true observed label for the task, observed after a prediction is made\footnote{While we assume immediate feedback in our main setting, the framework naturally extends to settings with delayed rewards, as standard contextual bandit methods can accommodate delayed updates \citep{blanchet2024delay}.}. Alternative reward definitions (not explored here) could be negative log-loss or negative squared error (Brier score).

We define the {\bf conditional expected reward} of agent $a$ as
$ \mu_a(x) := \mathbb{E}[r_{t,a} \mid x_t = x] $, which captures agent $a$’s performance conditioned on context $x$, with the corresponding {\bf marginal reward} defined as $\mu_a := \mathbb{E}_{x \sim P_X}[\mu_a(x)]$, representing agent $a$'s overall performance with respect to the task/context distribution. Using accuracy (as above) as the definition of reward, these expectations correspond to conditional expected accuracy as a function of context $x$ and marginal accuracy. To illustrate these ideas, Figure~\ref{fig:llm_accuracy} shows a simple case of two LLM classifier agents. The x-axis represents context $x$ and the $y$-axis corresponds to the estimated empirical conditional accuracy  for each agent, with varying agent expertise (accuracy, or expected reward) as a function of context. 

\begin{figure}[t]
    \centering
    \includegraphics[width=0.78\linewidth]{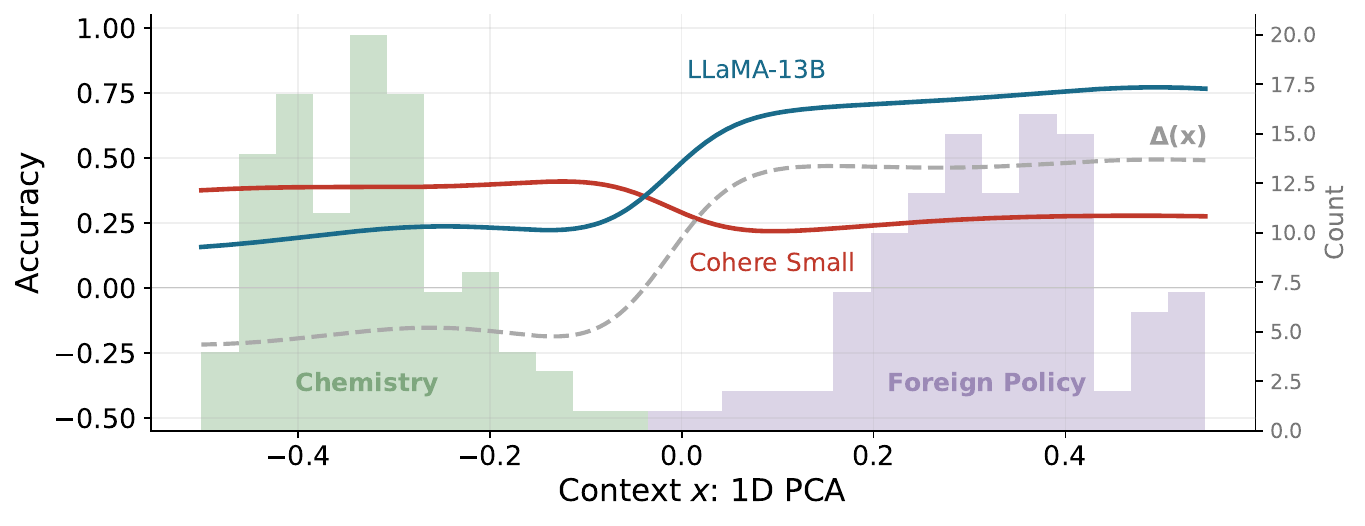}
    \caption{
    Accuracy (smoothed, empirically estimated) of two LLMs as a function of context. Context is defined as the first principal component of embeddings of multiple-choice questions from MMLU \citep{hendrycks2020measuring}, with embeddings from \texttt{sentence-transformers/all-MiniLM-L6-v2}. Cohere Small performs relatively better on College Chemistry tasks (left), while LLaMA-13B performs better on US Foreign Policy tasks (right). The dashed gray curve denotes the estimated contextual reward gap $\Delta(x):=\mu_{\mathrm{LLaMA}}(x)-\mu_{\mathrm{Cohere}}(x)$. The marginal accuracies are 0.32 for Cohere Small and 0.47 for LLaMA-13B.
    }
    \label{fig:llm_accuracy}
\end{figure}

\subsection{Optimal Assignment Policies without Agent Constraints}

\begin{definition}[Assignment Policy]
\label{def:assignment_policy}
    We define an {\bf assignment policy} $\phi$ as a (possibly stochastic) mapping from a task $x$ to an agent $a \in \{1,\ldots,A\}$. Let $P_{\phi}(a \mid x)$ denote the probability that policy $\phi$ assigns a task to agent $a$ given context $x$. For deterministic policies we have $P_{\phi}(a \mid x) = 1$ for a single agent $a = \phi(x)$ and $0$ otherwise. The {\bf marginal reward} of a policy $\phi$ is
    \[\mu^{\phi} := \mathbb{E}_{x \sim P_X} \left[ \sum_{a=1}^A P_{\phi}(a \mid x)\mu_a(x) \right].\]
\end{definition}

As a baseline, consider a non-contextual (random) policy that ignores $x$ and assigns tasks using some fixed set of probabilities $\{P_a\}$, where $\sum_a P_a = 1$, resulting in $\mu^{\phi} = \sum_{a=1}^A P_a \mu_a$.
Similarly, a policy that always assigns tasks to a single fixed agent $a'$ (such as the agent with the highest marginal reward) corresponds to $P_{a'} = 1$, resulting in marginal reward $\mu^{\phi} = \mu_{a'}$. 
For context-dependent deterministic policies, $\phi$ can be represented via scoring functions $\{s_a(x)\}_{a=1}^A$ such that
$\phi(x) = \arg\max_{a} \{ s_a(x) \}$.
This partitions the context space into regions $\mathcal{R}_a = \{x : \phi(x)=a\}$. The marginal reward averages over the context distribution $P_X$, so $\mu^{\phi} = \sum_{a=1}^A \int_{\mathcal{R}_a} \mu_a(x)\, P_X(x)\, dx$. 

\begin{definition}
    The {\bf optimal (unconstrained) policy} is $ \phi^\star(x) = \arg\max_{a} \mu_a(x)$, i.e., it assigns each context $x$ to the agent with the highest conditional expected reward for that context. Equivalently, the optimal decision regions $\{\mathcal{R}_a^\star\}$ satisfy $\mathcal{R}_a^\star = \{x \in \mathcal{X} : \phi^\star(x) = a\}$.
\end{definition}

\paragraph{Two-agent case without constraints:}
As mentioned earlier, Figure~\ref{fig:llm_accuracy} illustrates heterogeneous expertise: different agents perform better in different regions of the context space, indicating that the optimal policy requires routing tasks based on context. To formalize this, consider two agents with context-dependent performance $\mu_1(x)$ and $\mu_2(x)$, and define the {\bf contextual reward gap} $\Delta(x) := \mu_1(x) - \mu_2(x)$; see the dashed gray curve in Figure~\ref{fig:llm_accuracy} for an empirical estimate. The optimal (unconstrained) policy assigns each task to the agent with higher expected reward, $\phi^\star(x) = \arg\max_{a \in \{1,2\}} \mu_a(x)$, 
equivalent to a threshold rule in $\Delta(x)$.
In contrast, a context-independent policy selects a single agent based on marginal performance, yielding reward $\max\{\mu_1, \mu_2\}$, where $\mu_a = \mathbb{E}_x[\mu_a(x)]$. This policy ignores variation in $\Delta(x)$ and is generally suboptimal. 

Gains by a context-dependent policy arise whenever $P_X(\Delta(x)>0)>0$ and $P_X(\Delta(x)<0)>0$ (see Appendix \ref{app:two_agent}). A special case arises when one agent dominates across all contexts, i.e., $\mu_{a^\star}(x) \ge \mu_a(x)$ for all $a$ and $x$, in which case assigning all tasks to $a^\star$ is optimal. For context-dependent policies, the performance gap over a non-contextual policy is driven by regions where $\Delta(x)$ has the same sign, i.e., regions in the context space where one agent strictly dominates the other in terms of conditional expected reward.

\subsection{Optimal Assignment Policies with Agent Capacity Constraints}
\label{subsec:cap_constraint}
We now extend the framework to incorporate long-run capacity constraints across $A$ agents, meaning that each agent's capacity target is enforced in expectation over time.
Let $\alpha_a \in [0,1]$ denote the target long-run fraction of tasks assigned to agent $a$, with $\sum_{a=1}^A \alpha_a = 1$. 
The objective is to maximize expected reward subject to these long-run capacity constraints. We assume that the capacity constraints $\alpha_a$ are known and fixed for the problem, e.g., a company or hospital might have a policy that 90\% of tasks are assigned to a machine learning model and 10\% to a human expert. This notion of capacity differs from the cost-budget constraints studied in \citet{panda2025adaptive}: a cheaper, better-performing agent cannot simply be used for all tasks, because each agent has its own long-run assignment target.
For a non-contextual assignment policy, the only option is to assign tasks (randomly) based on probabilities.
$P_a=\alpha_a$, yielding $\mu^{\mathrm{rand}}=\sum_{a=1}^A \alpha_a \mu_a$.
For the theoretical characterization below, we assume a continuous context distribution: deterministic threshold policies achieve the target assignment capacities exactly. With discrete contexts, the same characterization applies, where exact equality requires stochastic tie-breaking at boundary contexts.

\begin{definition} 
    The {\bf optimal  constrained policy} 
    maximizes $\mu^{\phi} = \mathbb{E}_{x}\!\left[\sum_{a=1}^A P_\phi(a \mid x)\mu_a(x)\right]$ subject to capacity constraint $\mathbb{E}_{x}[P_\phi(a \mid x)] = \alpha_a$, 
    $ a = 1,\ldots, A$, $A \ge 2$.  
\end{definition}

\begin{theorem}[General Form of Optimal Constrained Policy]
The optimal policy takes a simple form: there exist constants $\{\lambda_a\}_{a=1}^A$ such that
\[
\phi^\star(x) = \arg\max_{a \in \{1,\dots,A\}} \bigl\{ \mu_a(x) - \lambda_a \bigr\}.
\]
\end{theorem}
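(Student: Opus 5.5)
We view the constrained problem as a linear program over the assignment probabilities $p_a(x) := P_\phi(a \mid x)$ and use Lagrangian duality. The feasible set is the set of measurable maps $x \mapsto (p_1(x),\dots,p_A(x))$ into the probability simplex $\Delta_A$ satisfying $\mathbb{E}_x[p_a(x)] = \alpha_a$ for each $a$. The objective $\mathbb{E}_x[\sum_a p_a(x)\mu_a(x)]$ is linear in $p$, the constraints are finitely many linear equalities, and the feasible set is nonempty: the non-contextual policy $p_a(x)\equiv\alpha_a$ is feasible. We attach a multiplier $\lambda_a$ to each capacity constraint and form
\[
L(p,\lambda) = \mathbb{E}_x\Bigl[\sum_{a=1}^A p_a(x)\bigl(\mu_a(x)-\lambda_a\bigr)\Bigr] + \sum_{a=1}^A \lambda_a \alpha_a .
\]

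\textbf{Step 1 (pointwise maximization).} For fixed $\lambda$, $L$ separates across $x$. Maximizing over $p(x)\in\Delta_A$ therefore puts all mass on $\arg\max_a\{\mu_a(x)-\lambda_a\}$, with any split allowed among ties. This gives the dual function
\[
g(\lambda)=\mathbb{E}_x\bigl[\max_a(\mu_a(x)-\lambda_a)\bigr]+\sum_a\lambda_a\alpha_a,
\]
which is convex and finite since $\mu_a\in[0,1]$. Weak duality $\mu^{\phi}\le g(\lambda)$ holds for every feasible $\phi$ and every $\lambda$.

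\textbf{Step 2 (existence of a dual minimizer and zero gap).} Note that $g$ is invariant under adding a common constant to all $\lambda_a$, because $\sum_a\alpha_a=1$. We normalize $\lambda_1=0$. Coercivity in the remaining coordinates follows when all $\alpha_a>0$: if some $\lambda_a\to+\infty$, the term $\lambda_a\alpha_a$ dominates while the expectation term stays bounded below. If some $\lambda_a\to-\infty$, the expectation grows like $-\lambda_a$ while the linear term decreases only like $\alpha_a\lambda_a$, a strictly smaller rate. Agents with $\alpha_a=0$ are simply dropped, setting $\lambda_a=+\infty$ or a sufficiently large value. Hence a minimizer $\lambda^\star$ exists.

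The first-order optimality condition for $g$ is $0\in\partial g(\lambda^\star)$. The subdifferential of $\lambda\mapsto\mathbb{E}_x[\max_a(\mu_a(x)-\lambda_a)]$ is the set of vectors $-(\mathbb{E}_x[q_a(x)])_a$, where $q(x)$ ranges over distributions supported on the argmax set at $x$. So $0\in\partial g(\lambda^\star)$ says exactly that some tie-breaking rule $q$ on the argmax sets achieves $\mathbb{E}_x[q_a(x)]=\alpha_a$ for all $a$.

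Under the standing continuity assumption, the sets $\{x:\mu_a(x)-\lambda_a=\mu_b(x)-\lambda_b\}$ are $P_X$-null. In that case the argmax is almost surely unique and $q$ is the deterministic rule $\phi^\star(x)=\arg\max_a\{\mu_a(x)-\lambda^\star_a\}$. In the discrete case, stochastic tie-breaking on boundary contexts is used instead, as the paper notes.

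\textbf{Step 3 (optimality).} The policy $q$ is feasible and maximizes $L(\cdot,\lambda^\star)$. Using feasibility,
\[
\mu^{q}=L(q,\lambda^\star)=g(\lambda^\star)\ge\mu^{\phi}
\]
for every feasible $\phi$, so $q$ is optimal. This is the claimed form.

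\textbf{Main obstacle.} The delicate part is Step 2: exchanging subdifferential and expectation to characterize $\partial g$, and justifying that the ties are null so the policy is deterministic. An alternative route avoids subgradients. Write the problem as a finite-dimensional optimal-transport/LP problem, or invoke a minimax theorem (Sion's theorem, since $L$ is bilinear and the feasible policy set is convex and weak-$*$ compact). I would try the direct subgradient computation first, using the fact that the max of finitely many affine functions has a subdifferential given by convex combinations of the active gradients, applied pointwise and then integrated.
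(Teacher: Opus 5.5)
Your proposal follows the same route as the paper's proof in Appendix~\ref{app:general_case}: relax the capacity constraints with multipliers, then maximize the Lagrangian pointwise over the simplex, which puts all mass on $\arg\max_a\{\mu_a(x)-\lambda_a\}$. It also fills in what the paper leaves unproved. The paper simply asserts that some $\lambda$ makes the argmax rule satisfy $\mathbb{E}_x[P_{\phi^\star}(a\mid x)]=\alpha_a$, and it leaves implicit why a feasible Lagrangian maximizer is optimal. Your Step~2 supplies the multipliers, via shift-invariance of $g$ from $\sum_a\alpha_a=1$, coercivity after normalization, and reading $0\in\partial g(\lambda^\star)$ as a tie-breaking rule that meets the capacities. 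Your Step~3 is exactly the missing weak-duality argument. Its equality case also shows that \emph{every} optimal policy is supported on the argmax sets almost surely, since $g(\lambda^\star)$ is the optimal value; this is a stronger reading of the theorem than the paper establishes. Your coercivity case analysis is loose when coordinates diverge in opposite directions. The clean bound is $g(\lambda)\ge\sum_a\alpha_a(\lambda_a-\min_b\lambda_b)\ge(\min_a\alpha_a)(\max_a\lambda_a-\min_a\lambda_a)$, using $\mu_a\ge 0$ and $\sum_a\alpha_a=1$.

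One inference should be corrected: an atomless $P_X$ does not make the tie sets $\{x:\mu_a(x)-\lambda^\star_a=\mu_b(x)-\lambda^\star_b\}$ null. For instance, $\mu_1\equiv\mu_2$ with $0<\alpha<1$ forces $\lambda^\star_1=\lambda^\star_2$, so every context is a tie. Null ties need the differences $\mu_a(X)-\mu_b(X)$ to be atomless, which is an assumption on the $\mu_a$, not on $P_X$. With only atomless $P_X$ you still get a deterministic optimal policy: split each positive-mass tie set into pieces with the masses prescribed by $q$ (Sierpi\'nski's theorem on atomless measures). Each context still goes to an element of the argmax. The paper's remark that the policy ``can be chosen deterministically'' makes the same leap.
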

See Appendix~\ref{app:general_case} for proof and additional details.
In the optimal policy, each task is assigned to the agent with the highest adjusted reward (i.e., $\mu_a(x) - \lambda_a$), where $\lambda_a$ acts as a shadow price for capacity. This policy is at least as good as the non-contextual assignment (i.e., $\mu^{\phi^\star} \ge \mu^{\mathrm{rand}}$), 
with strict inequality whenever the identity of the best agent depends on the context $x$.  In practice, both the reward functions $\mu_a(x)$ and the shadow prices $\{\lambda_a\}$ are unknown. In Section~\ref{sec:learning_policy}, we introduce a queue-based method that jointly learns $\mu_a(x)$ and approximates $\lambda_a$, $a = 1,\ldots, A$.

\paragraph{Two-agent case with capacity constraints.}
We consider a simplified setting with two agents, with capacities $\alpha_1=\alpha$ and $\alpha_2=1-\alpha$, with contextual reward gap $\Delta(x)  = \mu_1(x) - \mu_2(x)$. 

\begin{proposition}[Two-Agent Optimal Constrained Policy]
    Under capacity constraints, the optimal policy assigns agent $1$ to the $\alpha$-mass of contexts with the largest values of $\Delta(x)$, and agent $2$ to the remainder, yielding a threshold rule in $\Delta(x)$.
\end{proposition}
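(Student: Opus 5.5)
The plan is to reduce the two-agent problem to a one-dimensional allocation problem and argue directly, with the general theorem as a cross-check. Write $p(x) := P_\phi(1 \mid x) \in [0,1]$, so that $P_\phi(2\mid x) = 1 - p(x)$. The objective becomes
\[
\mu^\phi = \mathbb{E}_x\bigl[p(x)\mu_1(x) + (1-p(x))\mu_2(x)\bigr] = \mathbb{E}_x[\mu_2(x)] + \mathbb{E}_x\bigl[p(x)\Delta(x)\bigr].
\]
The two capacity constraints collapse to the single constraint $\mathbb{E}_x[p(x)] = \alpha$, because the constraint for agent $2$ then holds automatically. The first term does not depend on $\phi$. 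The problem is therefore a fractional-knapsack problem: maximize $\mathbb{E}[p\Delta]$ over measurable $p:\mathcal{X}\to[0,1]$ with $\mathbb{E}[p]=\alpha$.

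Next, define the threshold. Let $\tau$ be an upper $\alpha$-quantile of the distribution of $\Delta(x)$ under $P_X$, so that $P_X(\Delta(x) > \tau) \le \alpha \le P_X(\Delta(x)\ge\tau)$. Set $p^\star(x) = \mathbbm{1}\{\Delta(x) > \tau\}$, and put whatever mass is needed to reach exactly $\alpha$ on the level set $\{\Delta = \tau\}$. Under the paper's continuity assumption this level set can be handled deterministically, or it has measure zero. In the discrete case we use the stochastic tie-breaking the paper allows. The resulting policy is exactly ``agent $1$ on the $\alpha$-mass of largest $\Delta$, agent $2$ elsewhere.''

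The core step is an exchange argument showing optimality. Take any feasible $p$. Then
\[
\mathbb{E}\bigl[(p^\star - p)(\Delta - \tau)\bigr] \ge 0,
\]
because the integrand is pointwise nonnegative. Where $\Delta > \tau$ we have $p^\star = 1 \ge p$. Where $\Delta < \tau$ we have $p^\star = 0 \le p$. Where $\Delta = \tau$ the factor $\Delta - \tau$ is zero. Since $\mathbb{E}[p^\star] = \mathbb{E}[p] = \alpha$, the $\tau$ term drops out, which gives $\mathbb{E}[p^\star\Delta] \ge \mathbb{E}[p\Delta]$.

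This is the general theorem specialized to $A=2$: assigning agent $1$ iff $\mu_1 - \lambda_1 > \mu_2 - \lambda_2$ is the same as $\Delta(x) > \lambda_1 - \lambda_2 = \tau$. So $\tau$ is the difference of shadow prices, and the proposition reads as the theorem with the threshold pinned down by the capacity constraint.

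The main obstacle I expect is the careful definition of $\tau$ and of the tie-breaking on $\{\Delta=\tau\}$ when $\Delta(x)$ has atoms, so that $\mathbb{E}[p^\star]=\alpha$ holds exactly. I would first handle this with the quantile inequalities above. The remaining step is to note that any optimal $p$ must agree with $p^\star$ off $\{\Delta=\tau\}$ almost surely, since equality in the exchange inequality forces the integrand to vanish almost everywhere. That gives the threshold form as the characterization of all optima, not just as one optimum.
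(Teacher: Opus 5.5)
Your proposal is correct and follows the same route as the paper's proof in Appendix~\ref{app:two_agent_optimal_policy}. Both write the reward as $\mu_2$ plus the integral of $\Delta$ over agent~1's share, reduce the problem to choosing an $\alpha$-mass allocation, and identify the threshold with $\lambda_1-\lambda_2$. The difference is rigor, not strategy: the paper restricts to deterministic regions $\mathcal{R}_1$ and only asserts that the top-$\alpha$ set maximizes $\int_{\mathcal{R}_1}\Delta\,dP_X$, while your exchange inequality $\mathbb{E}[(p^\star-p)(\Delta-\tau)]\ge 0$ proves it, covers randomized policies and atoms of $\Delta(X)$, and gives almost-sure uniqueness off the tie set $\{\Delta=\tau\}$.
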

\label{prop:non-dominant}
See Appendix~\ref{app:two_agent_optimal_policy} for details. 

In the non-dominant case, where both $P_X(\Delta(x)>0)>0$ and $P_X(\Delta(x)<0)>0$, agents outperform each other in different regions of the context space. For example, Figure~\ref{fig:llm_accuracy} shows that the two LLMs specialize in different regions of context space, so $\Delta(x)$ changes sign. In this example, if LLaMa-13B has 25\% capacity and Cohere Small has 75\%, then the optimal capacity-constrained policy would intuitively threshold the context around $x=0.3$. 
We also note that the combined agents’ performance can outperform either individual agent under certain conditions. In particular, when $\alpha = P_X(\Delta(x)\ge 0)$, the policy assigns each context to the better-performing agent, and the resulting reward satisfies $\mu^{\phi^\star} > \max\{\mu_1,\mu_2\}$ (see Appendix~\ref{app:two_agent_outperform}).

More generally, the gains from contextual assignment are driven by disagreement between agents. Let $\mathcal{D}_1 = \{x:\Delta(x)>0\}$ and $\mathcal{D}_2 = \{x:\Delta(x)<0\}$ denote the regions where each agent has an advantage. The improvement over non-contextual assignment depends on both the size of these {\bf disagreement regions} and the magnitude of $\Delta(x)$, as well as the capacity  $\alpha$. 

\begin{proposition}[Full Exploitation of Disagreement]
\label{prop:full_disagreement}
    When capacity allows full exploitation of disagreement, i.e., $p_1 \le \alpha \le 1 - p_2$, the optimal policy assigns all $x \in \mathcal{D}_1$ to agent $1$ and all $x \in \mathcal{D}_2$ to agent $2$.

\end{proposition}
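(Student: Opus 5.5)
We take $p_1 := P_X(\mathcal{D}_1)$ and $p_2 := P_X(\mathcal{D}_2)$, and write $\mathcal{D}_0 = \{x : \Delta(x) = 0\}$, which has mass $1 - p_1 - p_2$. The plan is to reduce to the two-agent objective and use a pointwise bound.

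\textbf{Step 1: reduce to the two-agent objective.} For two agents set $q(x) := P_\phi(1 \mid x)$. Then
\[
\mu^{\phi} = \mathbb{E}_x[\mu_2(x)] + \mathbb{E}_x[q(x)\Delta(x)],
\]
subject to $\mathbb{E}_x[q(x)] = \alpha$ and $0 \le q \le 1$. The first term does not depend on $\phi$, so it suffices to maximize $\mathbb{E}_x[q\Delta]$ under these constraints.

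\textbf{Step 2: pointwise upper bound.} For every admissible $q$, we have $q(x)\Delta(x) \le \Delta(x)^+$ pointwise. Hence
\[
\mathbb{E}_x[q\Delta] \le \mathbb{E}_x[\Delta^+].
\]
Equality holds exactly when $q = 1$ a.e.\ on $\mathcal{D}_1$ and $q = 0$ a.e.\ on $\mathcal{D}_2$; on $\mathcal{D}_0$ the value of $q$ is free.

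\textbf{Step 3: feasibility of the bound.} We check that some feasible policy attains the bound. Such a policy must put mass $p_1$ from $\mathcal{D}_1$ on agent 1 and assign nothing from $\mathcal{D}_2$ to agent 1. The remaining $\alpha - p_1$ must then come from $\mathcal{D}_0$. This requires $0 \le \alpha - p_1 \le 1 - p_1 - p_2$, which is precisely the hypothesis $p_1 \le \alpha \le 1 - p_2$.

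Under the continuous-context assumption of Section~\ref{subsec:cap_constraint}, there is a deterministic subset of $\mathcal{D}_0$ of mass $\alpha - p_1$. If $\mathcal{D}_0$ has atoms, we instead use a constant tie-breaking probability $(\alpha - p_1)/(1 - p_1 - p_2)$ on $\mathcal{D}_0$. The degenerate case $p_1 + p_2 = 1$ forces $\alpha = p_1$, so no tie-breaking is needed.

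\textbf{Step 4: conclusion.} The resulting policy is feasible and attains the upper bound, so it is optimal. Conversely, any optimal policy must attain the bound, and by the equality condition in Step 2 it sends all of $\mathcal{D}_1$ to agent 1 and all of $\mathcal{D}_2$ to agent 2 up to null sets. This is also consistent with Proposition~\ref{prop:non-dominant}: one may take threshold $0$ in $\Delta$, with ties on $\mathcal{D}_0$ broken to meet the capacity.

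\textbf{Main obstacle.} The only delicate point is the tie set $\mathcal{D}_0$. If it has positive mass, a deterministic threshold in $\Delta$ alone cannot hit $\alpha$ exactly. We handle this with the paper's convention of stochastic tie-breaking, or with the continuity assumption, as in Step 3. Everything else is the elementary pointwise inequality of Step 2.
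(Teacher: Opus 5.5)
Your proof is correct, but it takes a different route from the paper's.

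The paper obtains this proposition as a corollary of the top-$\alpha$-mass characterization in Appendix~\ref{app:two_agent_optimal_policy}. There the optimum selects the $\alpha$-mass of contexts with the largest $\Delta(x)$. When $p_1 \le \alpha \le 1-p_2$, that set contains all of $\mathcal{D}_1$ and does not reach into $\mathcal{D}_2$, so the threshold is $\tau_\alpha = \lambda_1-\lambda_2 = 0$. That route is short and connects the result to the shadow-price form. However, it inherits the bathtub (rearrangement) step, which is asserted rather than proved in that appendix, and it is phrased only over deterministic regions $\mathcal{R}_1$.

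Your pointwise relaxation $q\Delta \le \Delta^+$, matched by a feasible policy, avoids the rearrangement argument and works directly over randomized policies. It also yields more than the paper states:
\begin{itemize}
\item in this regime the constrained optimum equals the unconstrained value $\mathbb{E}_x[\max\{\mu_1(x),\mu_2(x)\}]$, so capacity costs nothing;
\item every optimal policy must route $\mathcal{D}_1$ to agent~1 and $\mathcal{D}_2$ to agent~2 up to null sets;
\item the condition $p_1\le\alpha\le 1-p_2$ is exactly what makes such a policy feasible.
\end{itemize}

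Your explicit treatment of $\mathcal{D}_0$ is also more careful than the paper's. Whenever $p_1<\alpha<1-p_2$, we have $P_X(\mathcal{D}_0)=1-p_1-p_2>\alpha-p_1>0$. So no set of the form $\{x:\Delta(x)\ge\tau\}$ has mass exactly $\alpha$, and part of the tie set must be split off using information in $x$ beyond $\Delta(x)$, as you do in Step~3.
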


In this case, the gain over non-contextual assignment is $\mu^{\phi^\star} - \mu^{\mathrm{rand}}
=
(1-\alpha)p_1
\mathbb{E}\bigl[\Delta(X)\mid X\in\mathcal{D}_1\bigr]
+
\alpha p_2
\mathbb{E}\bigl[-\Delta(X)\mid X\in\mathcal{D}_2\bigr]$,
where $p_1 = P_X(\mathcal{D}_1)$ and $p_2 = P_X(\mathcal{D}_2)$. Thus, the magnitude of improvement depends not only on the size of the disagreement regions but also on the average contextual reward gap within those regions. This formalizes and generalizes the disparity intuition in \citet{chen2020fair}: gains arise when agents have different relative advantages across contexts, and our characterization links those gains to the size of the disagreement regions and the capacity $\alpha$. See Appendix~\ref{app:two_agent_disagreement} for details. 

\begin{proposition}[Optimal Policy under Agent Dominance]
    When one agent is dominant, i.e., $\mu_1(x)\ge \mu_2(x)$ for all $x$, the optimal policy assigns agent $1$ to the $\alpha$-mass of contexts with the largest values of $\Delta(x)$. As $\alpha \to 1$, the selected region $\mathcal{R}_1^\star$ expands to the entire context space $\mathcal{X}$, and the resulting reward satisfies $\mu^{\phi^\star}\to \mu_1$, converging to the unconstrained optimum.
    \end{proposition}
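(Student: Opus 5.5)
The plan is to reduce the statement to the Two-Agent Optimal Constrained Policy proposition and then study the limit $\alpha \to 1$ directly. With two agents and capacities $\alpha$ and $1-\alpha$, the constraint $\mathbb{E}_x[P_\phi(1\mid x)]=\alpha$ fixes the second constraint automatically. Write $q(x)=P_\phi(1\mid x)$. Then
\[
\mu^{\phi}=\mathbb{E}_x[\mu_2(x)]+\mathbb{E}_x[q(x)\Delta(x)],
\]
so maximizing $\mu^\phi$ is a fractional-knapsack problem: maximize $\mathbb{E}[q\Delta]$ over $0\le q\le 1$ with $\mathbb{E}[q]=\alpha$. By the two-agent proposition (equivalently, the general theorem with $\lambda=\lambda_1-\lambda_2$), the optimum is $q^\star(x)=\mathbbm{1}\{\Delta(x)>\tau_\alpha\}$, where $\tau_\alpha$ is the upper $\alpha$-quantile of $\Delta(X)$. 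This uses the continuity assumption, or stochastic tie-breaking on the boundary. The first claim needs no dominance at all; dominance only ensures $\Delta\ge 0$, so the threshold satisfies $\tau_\alpha\ge 0$.

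For the limit, I will argue as follows. The region $\mathcal{R}_1^\star(\alpha)=\{\Delta>\tau_\alpha\}$ has mass exactly $\alpha$, so $P_X(\mathcal{X}\setminus\mathcal{R}_1^\star(\alpha))=1-\alpha\to 0$. Moreover, $\tau_\alpha$ is nonincreasing in $\alpha$, so the regions are nested and increase to a set of full measure. This is the precise sense in which $\mathcal{R}_1^\star$ ``expands to the entire context space $\mathcal{X}$.'' It holds up to $P_X$-null sets, and I would state it in that measure-theoretic form.

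For the reward, write
\[
\mu_1-\mu^{\phi^\star}=\mathbb{E}\bigl[(1-q^\star(X))\Delta(X)\bigr].
\]
Since rewards are accuracies, $0\le\Delta\le 1$, so this quantity is at most $\mathbb{E}[1-q^\star]=1-\alpha\to 0$. It is also nonnegative by dominance, so $\mu^{\phi^\star}\to\mu_1$. Under dominance, the unconstrained optimum is the constant policy $\phi\equiv 1$ (from the unconstrained discussion), whose value is $\mu_1$. Hence the constrained optimum converges to the unconstrained one.

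The main obstacle is the quantile/threshold step when $\Delta(X)$ has atoms, for instance when $\Delta\equiv 0$ on a region of positive mass. There a deterministic threshold cannot hit mass $\alpha$ exactly. I will handle this by allowing randomization on $\{\Delta=\tau_\alpha\}$, as the paper permits. Everything else is a bounded-convergence estimate; without boundedness, dominated convergence with $\mathbb{E}|\Delta|<\infty$ gives the same limit.
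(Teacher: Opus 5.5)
Your proof is correct and follows the paper's route. You write $\mu^{\phi}=\mu_2+\int q\,\Delta\,dP_X$, invoke the two-agent top-$\alpha$-mass threshold result (dominance only forces $\tau_\alpha\ge 0$), and let the selected region fill $\mathcal{X}$ so that the reward tends to $\mu_2+\mathbb{E}[\Delta]=\mu_1$. The paper only asserts the limit, whereas your explicit bound $0\le \mu_1-\mu^{\phi^\star}=\mathbb{E}[(1-q^\star)\Delta]\le 1-\alpha$, together with your ``up to $P_X$-null sets'' reading of the region's expansion, proves it.
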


Since $\mu_1(x)\ge \mu_2(x)$ for all $x$, we have $\Delta(x)\ge 0$ everywhere. Because $\Delta(x)$ is nonnegative for all $x$, the threshold characterization corresponds to selecting the contexts where agent $1$ has the greatest advantage over agent $2$. As a specialization to the two-agent setting, the contextual policy satisfies $\mu^{\phi^\star} \ge \mu^{\mathrm{rand}}$ in both the non-dominant and dominant cases, with strict inequality whenever agent performance depends on the context (see Appendix~\ref{app:two_agent_optimal_over_radnom}).

\section{Online Learning of Assignment Policies with Agent Capacities}
\label{sec:learning_policy}
Below we develop a Bayesian multi-arm bandit (MAB) framework for learning the optimal assignment policy online when the conditional expected rewards $\mu_a(x)$ are unknown. At each round $t$, a context $x_t$ arrives, an agent $a_t$ is selected by the current policy (taking constraints into account), a reward $r_{t,a_t}$ is observed, and the contextual reward model is updated for the selected agent.

\subsection{Context Modeling}
For context models $\mu_a(x)$, we consider logistic and tree-based models as described below. For the {\bf logistic context model},
agent $a$'s conditional expected reward is modeled via $\mu_a(x_t) = \mathbb{P}(r_{t,a}=1 \mid x_t) = \sigma(\theta_a^\top x_t)$, where $\theta_a$ is a latent parameter vector governing agent $a$'s context-dependent expertise and $\sigma(z)=(1+e^{-z})^{-1}$. Following \citet{chapelle2011empirical}, we maintain an approximate Gaussian posterior for $\theta_a$ over rounds $t$ using a Laplace approximation. We investigate two MAB strategies in this context (alternatives such as the upper-confidence bound (UCB) method could also be used). The first is Thompson sampling (TS), where at each round $t$, a parameter vector is sampled from the current posterior distribution over $\theta_a$,  the corresponding sampled reward estimate $\tilde{\mu}_{a,t}(x_t)=\sigma(\tilde{\theta}_{a,t}^{\top}x_t)$ is computed, and this value is used by the assignment policy to select an agent.  The second strategy is greedy, using an MAP point estimate $\hat{\theta}_{a,t}$ to compute $\hat{\mu}_{a,t}(x_t)=\sigma(\hat{\theta}_{a,t}^{\top}x_t)$. The greedy policy selects agents based on this estimated expected reward, without explicit posterior sampling. See details in Appendix~\ref{app:logistics_posterior_update}

To capture non-linear dependencies in $\mu_a(x_t)$,
we also use {\bf tree-based (random forest) context models}, where $\mu_a(x_t)$ is estimated by averaging predictions across an ensemble of trees trained on a set of $B$ bootstrap samples. Again we investigate both greedy selection and Thompson sampling for assignment. 
For greedy selection, we use the ensemble mean $\mu_{a,t}(x_t) = \frac{1}{B}\sum_{b=1}^B f_{a,t}^{(b)}(x_t)$, where $f_{a,t}^{(b)}$ denotes the $b$-th tree for agent $a$ at time $t$. To enable exploration, we approximate Thompson sampling using bootstrap uncertainty (similar to \citep{eckles2014thompson}): at each round $t$, we sample a tree uniformly at random and use its prediction $\tilde{\mu}_{a,t}(x_t) = f_{a,t}^{(b)}(x_t), b \sim \mathrm{Uniform}\{1,\dots,B\}$. Additional details are in Appendix \ref{app:tree_contextual_model}.

\subsection{Enforcing Capacity Constraints}
\label{subsec:virtual_quque}
To enforce the constraints online, we adopt a queue-based approach following \citet{neely2010stochastic}; similar queue-based ideas have appeared in prior work on bandits with fairness constraints \citep{li2019combinatorial,huang2020thompson}. For each agent $a$, we maintain a virtual queue $Q_{t,a}$ that tracks deviations between realized assignments and the target capacity $\alpha_a$. The queues evolve as
\begin{equation}
Q_{t+1,a} = \bigl[ Q_{t,a} + I_{t,a} - \alpha_a \bigr]_+, \quad a=1,\dots,A,
\label{eq:virtual_queue}
\end{equation}
where $I_{t,a}:=\mathbbm{1}\{a_t=a\}$ indicates whether agent $a$ is selected at round $t$, and $[\cdot]_+ = \max(\cdot,0)$. Intuitively, $Q_{t,a}$ increases when agent $a$ is over-assigned relative to its target capacity and remains small otherwise.
At each round $t$, the assignment balances estimated reward and capacity pressure:
\begin{equation}
a_t = \arg\max_{a \in \{1,\dots,A\}} \left\{ s_a(x_t) \right\}, \qquad \mbox{where} \ s_a(x_t) = \mu_{a,t}(x_t) - \eta Q_{t,a},
\label{eq:agent_select_w_cap}
\end{equation}
where $\eta \ge 0$ controls the strength of the capacity penalty and $\mu_{a,t}(x_t)$ is generated either by  TS or the greedy procedure for each agent.
This rule can be interpreted as a dynamic approximation to the optimal policy of the form $\mu_a(x) - \lambda_a$, where $\{\lambda_a\}$ are the shadow prices arising from the constrained optimization problem (see Appendix~\ref{app:general_case}). $Q_{t,a}$ serves as a time-varying estimate of the shadow price. Intuitively, $\lambda_a$ acts as a penalty for assigning additional tasks to agent $a$, discouraging overuse when capacity is limited.
Our general online learning framework, which combines contextual reward models with queue-based capacity constraints, is summarized in Algorithm~\ref{alg:online_assignment} in Appendix~\ref{app:context_model_para_update}. Implementation details are provided in Appendix~\ref{app:capacity_constraints}. Appendix~\ref{app:regret} shows that our algorithm inherits standard sublinear regret bounds.

\section{Experimental Results}
\label{sec:exp}
We evaluate our framework across settings involving different types of classification agents for $K$-ary classification tasks with binary rewards $r_{t,a} = \mathbbm{1}\{\hat{y}_{a,t} = y_t\}$, i.e., the agent receives reward $1$ for a correct prediction and $0$ otherwise. Our results demonstrate that contextual policies systematically outperform non-contextual baselines in identifying expertise and allocating tasks under capacity constraints. We consider machine learning classifiers,  LLMs, and both human and simulated human agents. Simulated agents allow for controlled evaluation of heterogeneous expertise and learning dynamics \citep{madras2018predict, mozannar2020consistent,  alves2024cost}. See Appendix \ref{app:exp} for full details on experimental methods and datasets.  We focus here on the two-agent case ($A=2$)  for ease of visualization; multi-agent results are provided in Appendix~\ref{app:multi-agent}.

Across experiments, we introduce distribution shift by training and evaluating on different parts of the data, reflecting realistic deployment settings. Unless otherwise stated, all experiments follow the same evaluation protocol. We run each policy over 100 randomized permutations of the online task sequence and report average error rates. We vary the capacity of Agent 1 from 0 to 1, with Agent 2 taking the remaining capacity, and set the queue penalty parameter to $\eta=0.5$. This value provides a practical trade-off between reward maximization and capacity enforcement; Appendix~\ref{app:diff_eta} reports sensitivity analyses for other values of $\eta$. The empirical patterns are stable across randomized permutations, with average error rates effectively unchanged as the number of runs increases. Across the 100 runs, standard deviations are small for all datasets: the largest is approximately $0.02$ for MMLU dataset, while the others are on the order of $10^{-3}$. Since this variation is visually negligible at the scale of the plots, we report mean error rates without plotting error bars.

For reference, we consider a non-contextual baseline policy that randomly assigns each task to agent $a$ with fixed probability $\alpha_a$, independent of $x$, and thus does not exploit heterogeneity in $\mu_a(x)$.
This baseline is also asymptotically equivalent in expected reward to learned non-contextual policies that estimate only marginal agent accuracy, such as $a_t \in \arg\max_a \left\{\hat{\mu}_{a,t}-\eta Q_{t,a}\right\}$ (details in Appendix~\ref{app:noncontextual-equivalence}),
because such policies do not condition assignments on \(x\). We note that existing algorithms used in the fair-bandit literature \citep{li2019combinatorial,huang2020thompson, chen2020fair,claure2020multi} are not comparable as baselines since they address different problems than ours, e.g., where there is a common minimum rate of workload for all agents and workloads are unconstrained above that, and/or the algorithms are non-contextual  (resulting in identical performance as our non-contextual algorithm implementation). 

To assess how online and offline policy learning methods compare in terms of performance, we also consider an offline unconstrained benchmark that learns each agent’s expertise using all available test data, and revisits this test data to assign tasks via $\phi(x_t) = \arg\max_{a} \hat{\mu}_a(x_t)$. This is an optimistic estimate of the error rate when agent expertise is learned offline without capacity constraints.

\subsection{Experiments with Image, Tabular, and Text Classification Tasks}
\label{subsec:main_results}
We start by introducing an example with the \textbf{Camelyon17} medical image classification dataset \citep{bandi2018detection}. For illustration, we train 2 linear classifiers, each trained on entirely different data from 2 different hospitals, and then evaluated on a third hospital. We use the first principal component of several image features as context for the contextual policies (see Appendix \ref{app:dataset} for details). Figure~\ref{fig:camerelyon_pca} visualizes heterogeneity in agent expertise. The two agents exhibit different accuracy profiles as a function of context, and the estimated contextual reward gap $\Delta(x)$ changes across the context space. This indicates that neither agent is uniformly best and that there is potential value in context-dependent assignment.

Figure~\ref{fig:camerelyon} reports the average error rate as a function of Agent 1's capacity, where Agent 1 is trained on Hospital 1 data. The upper dotted line is the error rate for the non-contextual (random) policy with capacity $\alpha_1$ (x-axis), where the two endpoints correspond to assigning all tasks to a single agent (resulting in the marginal error rates of the two agents).
The lower dotted lines are the optimistic error rates from the offline unconstrained context-dependent models (logistic and tree-based, as described above). The intermediate lines are the error rates (averaged over runs) obtained from running the greedy and Thompson contextual policies for each of the logistic and tree-based context models, at different specific capacities $\alpha_1 \in \{0.2, 0.4, 0.5, 0.6, 0.8\}$.
All contextual policies outperform the non-contextual baseline across all capacity levels. Moreover, the combined policy can achieve a significantly lower error than the non-contextual policy (e.g., at $\alpha_1 = 0.5$), demonstrating that leveraging context-dependent expertise systematically improves performance, and can result in performance better than either agent’s marginal error rate, consistent with the theoretical results in Section~\ref{subsec:cap_constraint}. The offline unconstrained policies assign roughly half of the tasks (not shown on the graph) to Agent~1 for both logistic and tree models, indicating that the unconstrained context-dependent allocation is close to $0.5/0.5$ split. Thus, $\alpha_1 = 0.5$ is the empirical capacity allocation closest to allowing full exploitation of disagreement, in the sense of Proposition \ref{prop:full_disagreement}. The ``error gap" between the optimistic offline error of 0.16 and the online policies' average error of roughly $0.23$ at $\alpha_1 = 0.5$ reflects the cost of online learning, while moving away from the point introduces an additional error penalty from enforcing the constraints.

\begin{figure}[h]
    \centering 
   \begin{subfigure}[c]{0.46\linewidth}
        \centering
        \includegraphics[width=\linewidth]{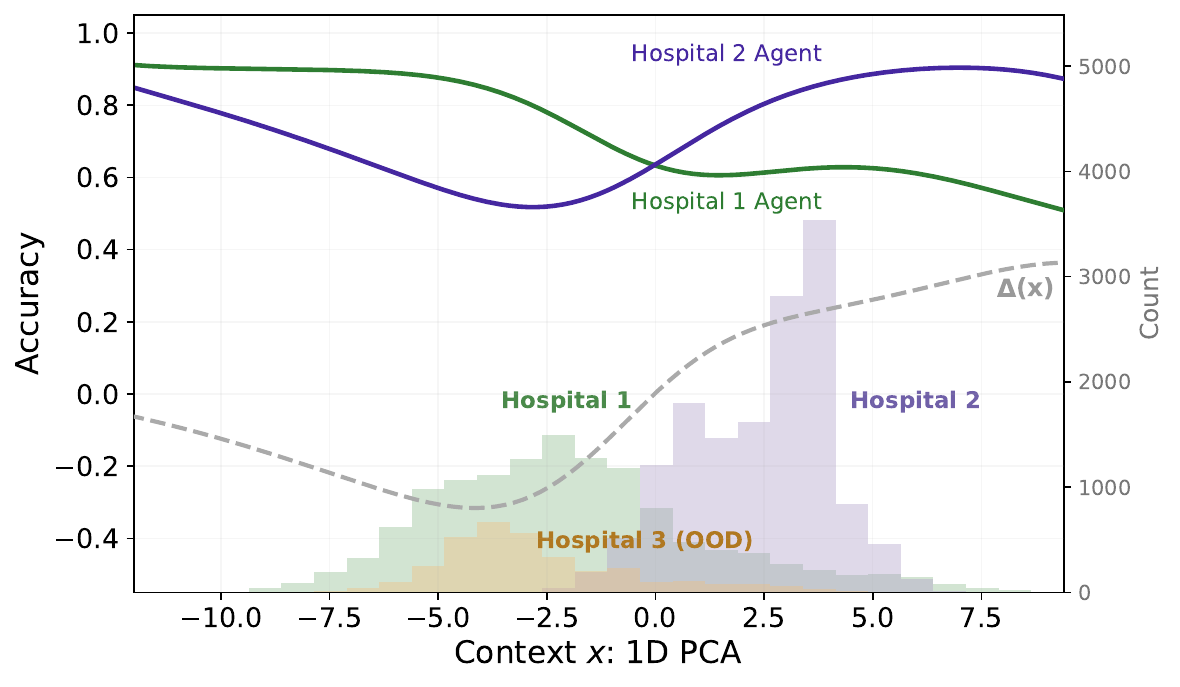}
            \caption{}
        \label{fig:camerelyon_pca}
    \end{subfigure}
    \begin{subfigure}[c]{0.48\linewidth}
        \centering
        \includegraphics[width=\linewidth]{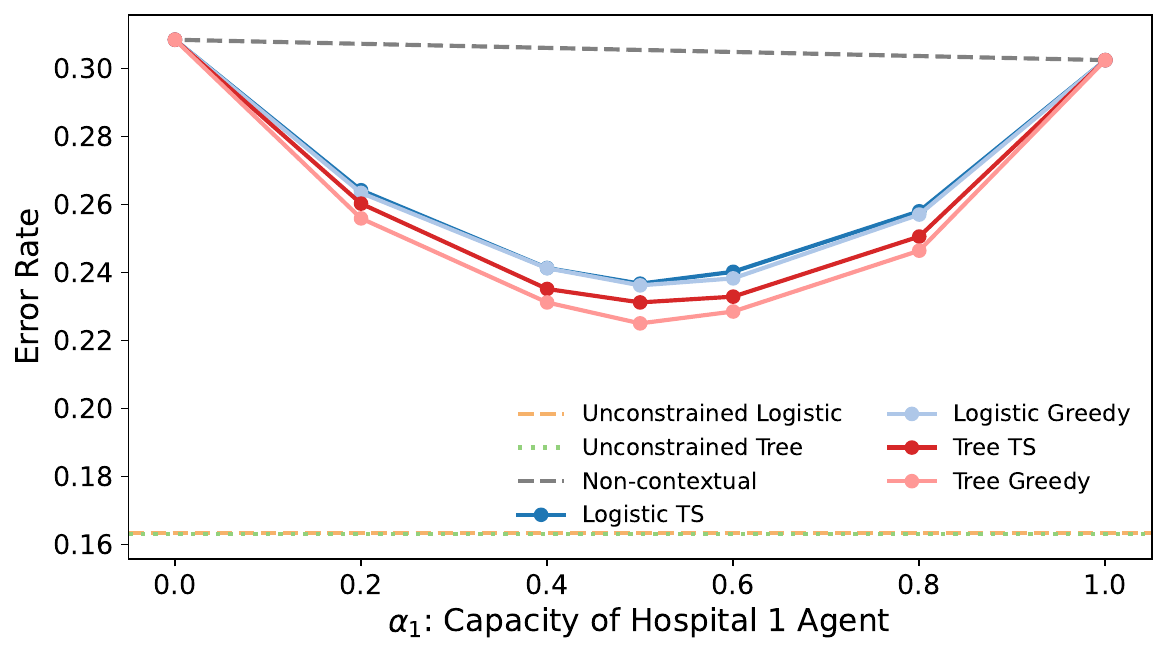}
        \caption{} 
        \label{fig:camerelyon}
    \end{subfigure}%
    \caption{(a) Empirical accuracy as a function of context, (b) error rate vs capacity of Agent 1 averaged over 2000 image classification tasks and 100 random online streams from the \textbf{Camelyon17}.}
\end{figure}

We further evaluate our framework on six additional datasets: four tabular, one text-based, and one image-based. For each tabular dataset, we construct two agents with heterogeneous expertise: a logistic regression model trained on a random subset of up to 20\% of the features, and an XGBoost classifier trained on the remaining features. The contextual policy observes the full feature space. Each agent is trained on 70\% of the data, and the remaining 30\% is used to simulate the online task sequence. The tabular datasets are \textbf{Bank} (15 features, 12{,}357 tasks), predicting term deposit subscription; \textbf{Credit} \citep{yeh2009comparisons} (23 features, 9{,}000 tasks), predicting default; \textbf{Coupon} \citep{wang2017bayesian} (24 features, 3{,}624 tasks), predicting coupon acceptance; and \textbf{Cardio} (11 features, 21{,}000 tasks), predicting cardiovascular disease. We also include two additional settings. For the text dataset, we evaluate LLM agents on the \textbf{MMLU} dataset \citep{hendrycks2020measuring}, using the models illustrated in Figure~\ref{fig:llm_accuracy}. For the image dataset, we use the \textbf{ImageNet16H} dataset \citep{steyvers2022bayesian}, which contains human classifications of noisy images. In this setting, the two agents are one human annotator and a fine-tuned VGG-19 convolutional neural network. 

Figure~\ref{fig:main_results} reports the average error rate as the capacity $\alpha_1$  of Agent 1 (x-axis) varies from 0 to 1. 
As in Figure~\ref{fig:camerelyon}, the endpoints correspond to the marginal error rates of the individual agents. 
Across all datasets, contextual policies (logistic and tree-based, both Thompson Sampling and Greedy) consistently outperform the non-contextual random baseline, highlighting the value of learning context-dependent expertise. 
The combined policies do not necessarily outperform both individual agents, as performance depends on the distribution of $\Delta(x)$ and the capacity constraint (as discussed in Section \ref{subsec:cap_constraint}). Tree-based models generally outperform logistic models, indicating the importance of capturing nonlinear and heterogeneous patterns in $\mu_a(x)$. Greedy and Thompson Sampling perform similarly: the effectiveness of the greedy method in this context is a result of implicit agent exploration resulting from the constraints (see also the effectiveness of greedy MAB in contextual bandit settings \citep{bayati2020unreasonable}).
See Appendix~\ref{app:dataset} for implementation details and \ref{app:tab_more_seeds} for additional results. Appendix \ref{app:with_free_agent} presents the case in which one agent (e.g., the AI model) has no capacity constraint. 

\begin{figure}[t]
\centering

\begin{subfigure}{0.32\linewidth}
    \centering
    \includegraphics[width=\linewidth]{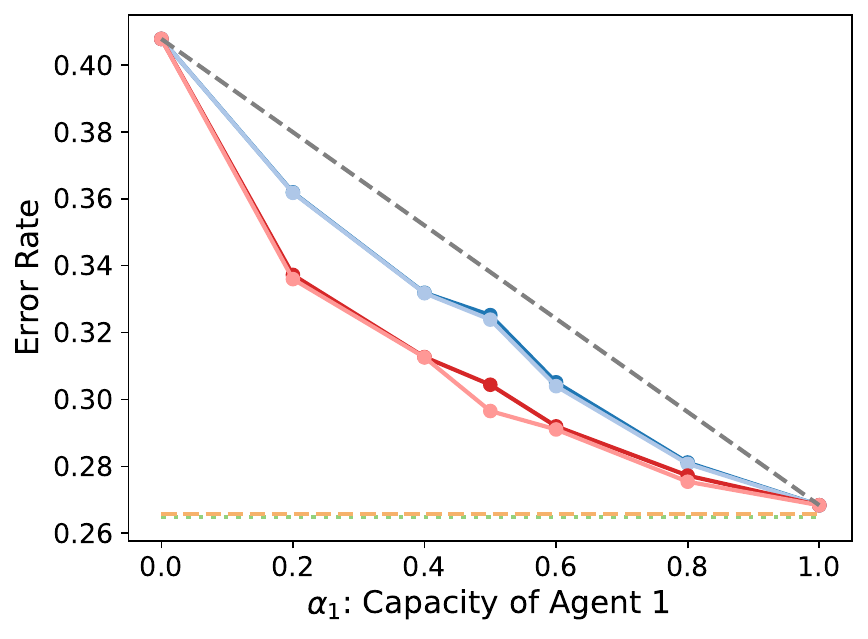}
    \caption{Cardio}
\end{subfigure}
\hfill
\begin{subfigure}{0.32\linewidth}
    \centering
    \includegraphics[width=\linewidth]{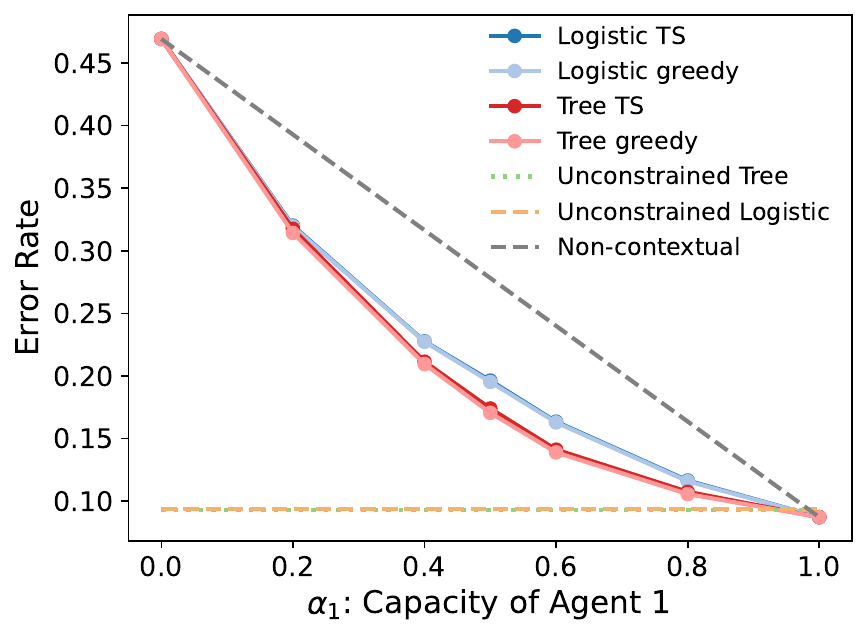}
    \caption{Bank}
\end{subfigure}
\hfill
\begin{subfigure}{0.32\linewidth}
    \centering
    \includegraphics[width=\linewidth]{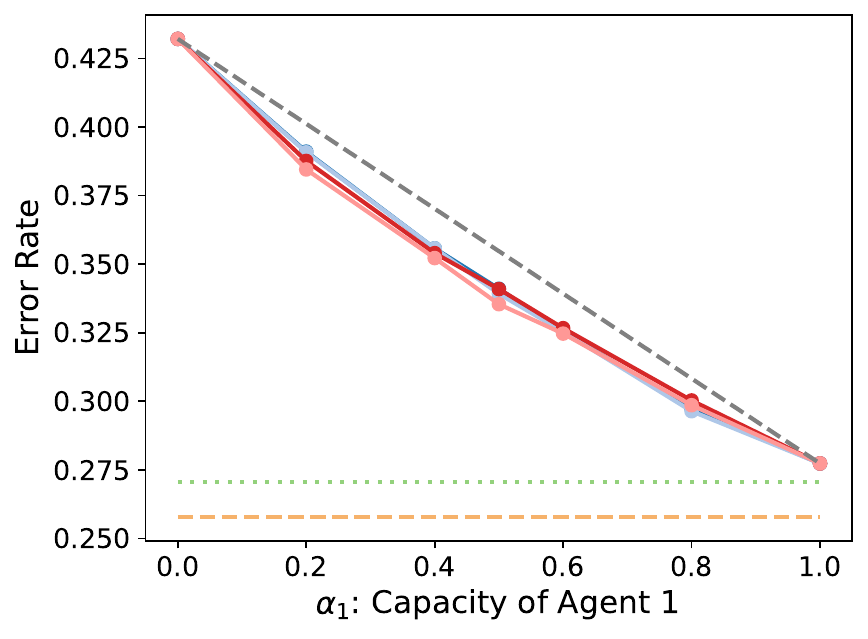}
    \caption{Coupon}
\end{subfigure}

\vspace{0.5em}

\begin{subfigure}{0.32\linewidth}
    \centering
    \includegraphics[width=\linewidth]{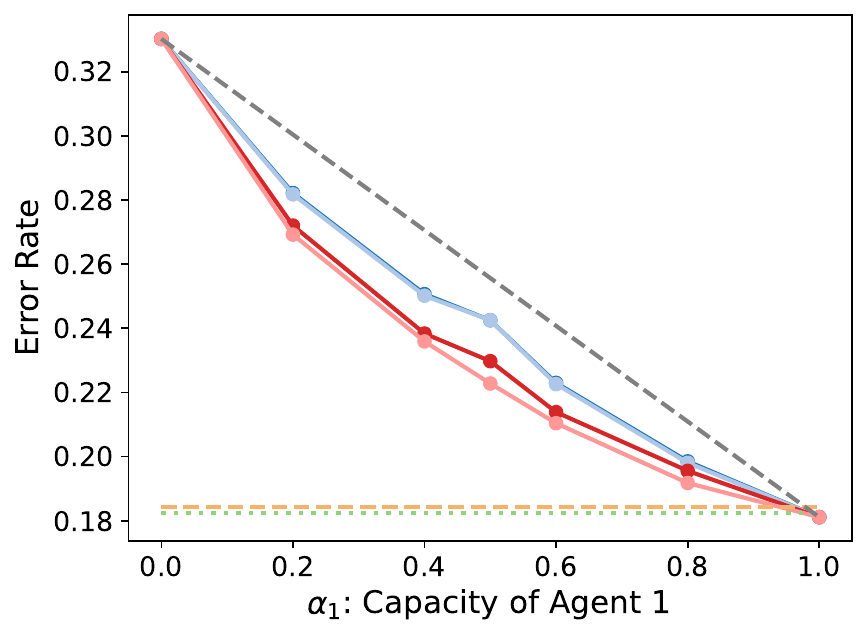}
    \caption{Credit}
\end{subfigure}
\hfill
\begin{subfigure}{0.32\linewidth}
    \centering
    \includegraphics[width=\linewidth]{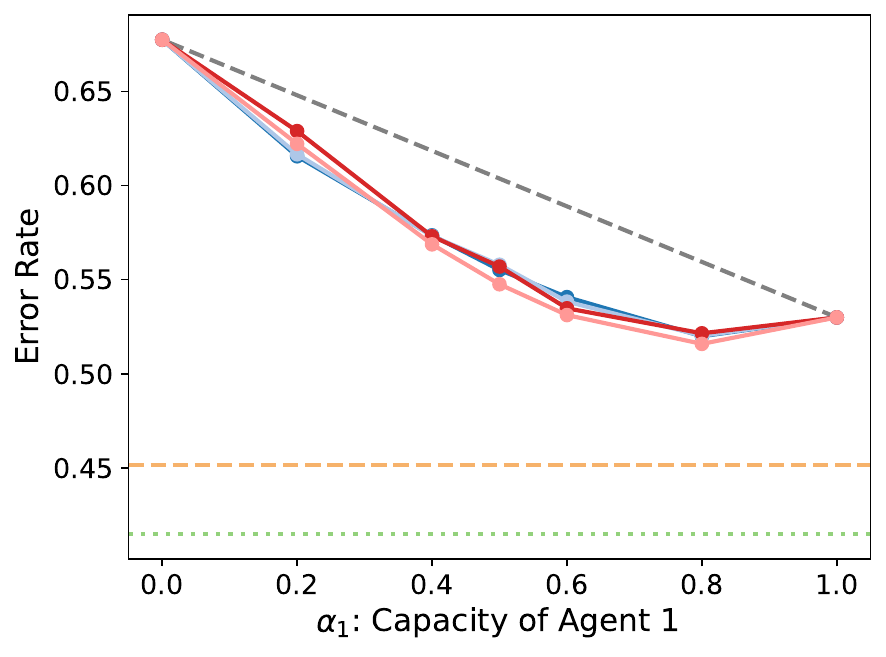}
    \caption{MMLU}
\end{subfigure}
\hfill
\begin{subfigure}{0.32\linewidth}
    \centering
    \includegraphics[width=\linewidth]{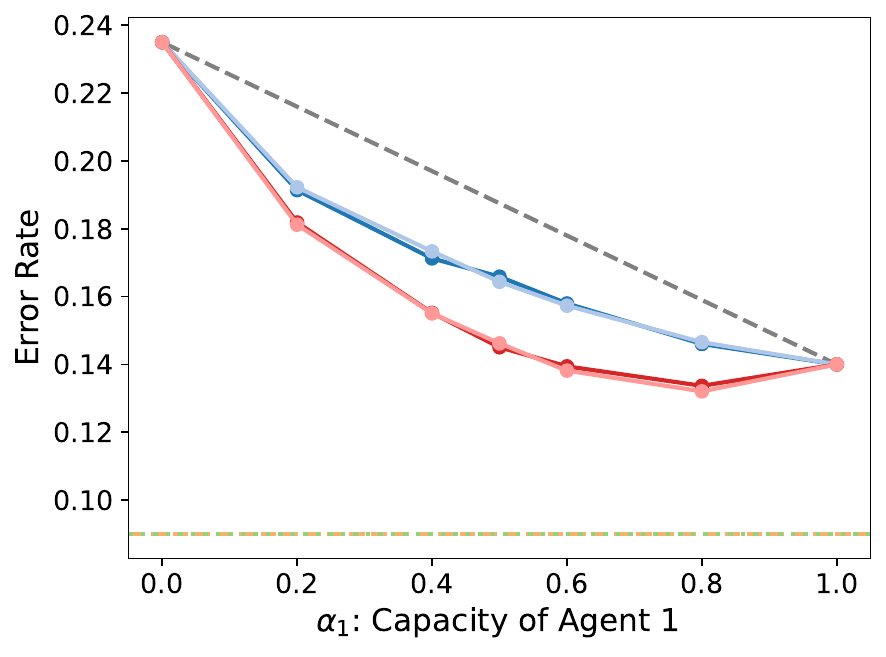}
    \caption{ImageNet16H}
\end{subfigure}

\caption{Error rate as a function of capacity (two-agent case), for various policies, across 6 datasets. Points represent the average error rate over 100 runs.}
\label{fig:main_results}
\end{figure}

\subsection{Mini-Batch Setting}
\label{subsec:mini_batch}
An alternative to fully online assignment, is a mini-batch setting where assignments are made for  $N_B$ tasks at a time, and outcomes are observed after the batch is completed. This setting has an inherent latency-error tradeoff: batching can improve allocation under capacity constraints by considering multiple tasks jointly, but assignments and predictions are delayed.
We illustrate this using the \textbf{Bank} dataset with the same agents as in Section~\ref{subsec:main_results}. Figure~\ref{fig:mini_batch_batch_100} compares online and mini-batch performance with batch size $N_B=100$ (50 runs) for the tree-based greedy policy. Mini-batch assignment yields modest but consistent improvements, as batching enables more efficient allocation by considering tasks jointly. We further vary the batch size from small values to the full dataset at a fixed capacity, $\alpha_1=\alpha_2=0.5$, averaging results over 10 runs for each batch size (Figure~\ref{fig:mini_batch_vary_batch_size}). The results exhibit a U-shaped error curve: small batches allow frequent updates but lead to noisy capacity enforcement, while large batches stabilize allocation but delay learning. Intermediate batch sizes balance these effects and achieve the lowest error. Overall, mini-batch assignment can improve performance over the fully online setting when a reasonable batch size is chosen, but this improvement comes at the cost of increased latency.
See Appendix~\ref{app:mini-batch} for implementation details and additional results.

\begin{figure}[t]
    \centering

    \begin{subfigure}{0.43\linewidth}
        \centering
        \includegraphics[width=0.75\linewidth]{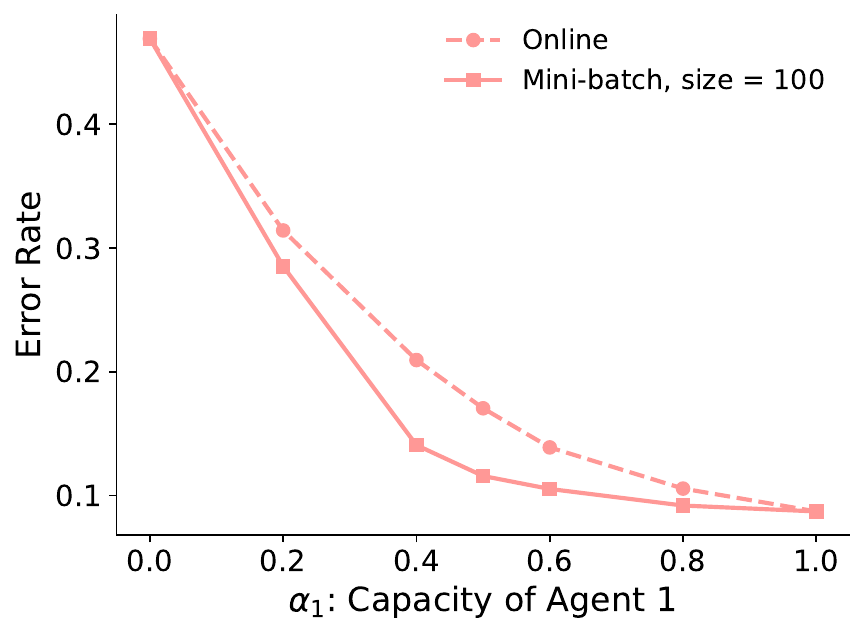}
        \caption{Online vs mini-batch assignment with batch size $N_B=100$. Error rates are averaged over 50 runs as the capacity of Agent 1 varies.}
        \label{fig:mini_batch_batch_100}
    \end{subfigure}
    \hfill
    \begin{subfigure}{0.43\linewidth}
        \centering
        \includegraphics[width=0.78\linewidth]{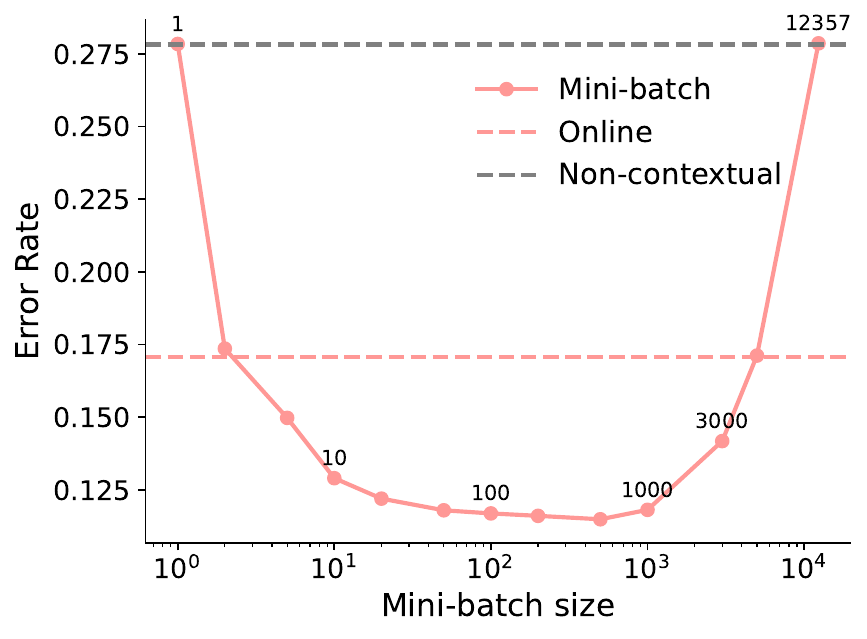}
        \caption{Effect of batch size when Agents 1 and 2 have equal capacities, $\alpha_1=\alpha_2=0.5$. Average error rates are computed over 10 runs. }
        \label{fig:mini_batch_vary_batch_size}
    \end{subfigure}

    \caption{Results of mini-batch assignment for the \textbf{Bank} dataset using the tree-based greedy policy. }
    \label{fig:mini_batch}
\end{figure}

\section{Related Work}
\label{sec:related_work}

Our work relates broadly to different strands of prior work in task routing and assignment problems.
One such line of work is {\bf learning to defer}: most work in this area studies a two-agent setting, with one model and one human, and typically learns the deferral rule offline from labeled data and observed expert decisions \citep{stronglearning, madras2018predict,mozannar2020consistent,mozannar2023should, alves2024cost}. Recent papers incorporate workload and resource constraints \citep{keswani2021towards,nguyen2025probabilistic,zhang2026fatigue,desalvo2025budgeted}, but remain largely offline. \citet{reid2024online} studied online decision deferral, but assume full-information feedback, observing outcomes for both the model and the human regardless of the routing decision. 

Another related line of work studies {\bf LLM routing}, where an algorithm learns to route tasks among models with different costs and accuracies. 
Most existing work focuses on learning routing rules from offline evaluations or preference data, with the goal of balancing cost and accuracy \citep{chen2023frugalgpt,ong2024routellm}, which is different to our approach where we focus on online learning of agent abilities while respecting capacity constraints. More recent work has, however, studied online LLM routing, under budget constraints \citep{panda2025adaptive}. However, these constraints are typically global cost budgets, so that (for example) all tasks could be routed to a cheap and high-performing model as long as the total budget is satisfied. This differs from our setting, where agents face long-run capacity limits that capture operational constraints such as availability, workload, or staffing.

{\bf Fairness-constrained bandit algorithms} typically study online allocation with fairness concerns, often in non-contextual or combinatorial settings \citep{claure2020multi,li2019combinatorial,huang2020thompson}, with some contextual extensions \citep{chen2020fair}. 
Unlike \citet{chen2020fair}, who study discrete contexts with a uniform minimum-rate fairness constraint, our work studies predictive task assignment with continuous task contexts, agent-specific capacities, and queue-based enforcement that discourages short-run overuse while maintaining long-run assignment targets.
There is also a large literature in general on {\bf regret bounds for fair contextual bandits} \citep{li2010contextual,badanidiyuru2018bandits,balseiro2020dual,neu2022lifting,blanchet2024delay}, but this work focuses on algorithmic regret bounds rather than the characterization of assignment policies that we present here.

\section{Discussion and Conclusion}
\label{sec:discussion_conclusion}
Regarding the \textbf{limitations} of our work, we evaluate our framework using proxy agents in controlled settings with relatively low-dimensional contexts. This controlled setup makes it easier to compare contextual and non-contextual assignment policies, but it may understate the complexity of real-world task allocation, where task contexts can be high-dimensional, multimodal, or only partially observed.
We also focus on classification tasks with binary reward and immediate feedback. Practical applications may involve delayed outcomes, multi-class classification with asymmetric utilities for different misclassifications, or continuous rewards, and noisier measures of agent expertise. \textbf{Future work} can extend the framework in several directions. A natural next step is to study high-dimensional and unstructured contexts, such as text or interaction histories, where the system can learn useful task features before identifying which agents are best suited for each task. Another direction is to move beyond assigning each task to a single agent and allow multiple humans or AI agents to collaborate on the same task, requiring mechanisms for aggregation, disagreement, and capacity-aware coordination. Also of interest is evaluating these policies in real human-AI workflows, where expertise, availability, and behavioral responses to assignment policies may evolve over time.

In summary, we study sequential task allocation under capacity constraints when agents have unknown, context-dependent expertise. We demonstrate both theoretically and empirically that contextual assignment can substantially improve performance over non-contextual allocation by learning which agents are best suited for which tasks, while capacity constraints induce a form of implicit exploration by ensuring that all agents continue to receive assignments. Our results suggest that effective human-AI collaboration depends not only on improving individual agents but also on designing allocation policies that learn heterogeneous expertise while respecting real-world capacity limits.


\acks{This work was supported by National Science Foundation under award NSF 2505006, by the Hasso Plattner Institute (HPI) Research Center in Machine Learning and Data Science at UCI, and by funding support from Google and from SAP. }


\newpage

\appendix
\section{Optimal Policy Without Capacity Constraint}
\label{app:general_unconstrained}

Here, we characterize the optimal policy for $A \ge 2$ agents and compare it with non-contextual assignment.

The objective is
\[
\max_{\phi} \; 
\mu^{\phi}
=
\mathbb{E}_{x}\!\left[
\sum_{a=1}^A P_\phi(a \mid x)\mu_a(x)
\right],
\]
where $P_\phi(a \mid x)$ denotes the probability that policy $\phi$ assigns context $x$ to agent $a$, with $\sum_a P_\phi(a \mid x)=1$ for all $x$.

Since the objective is an expectation over $x$, the assignment decision at one context does not affect assignments at other contexts. Thus, for each fixed $x$, the policy solves
\[
\max_{P_\phi(\cdot \mid x)}
\sum_{a=1}^A P_\phi(a \mid x)\mu_a(x),
\quad
\text{s.t.} \quad
\sum_a P_\phi(a \mid x)=1.
\]
Because this objective is linear in $P_\phi(\cdot \mid x)$, an optimum is attained by assigning all probability to an agent with the largest conditional expected reward. Therefore, the optimal policy can be written as
\[
\phi^\star(x)
=
\arg\max_{a \in \{1,\dots,A\}} \mu_a(x).
\]
The resulting marginal reward is
\[
\mu^{\phi^\star}
=
\mathbb{E}_{x}\!\left[
\max_{a \in \{1,\dots,A\}} \mu_a(x)
\right].
\]

We now compare this policy with a non-contextual random assignment policy that assigns each task to agent $a$ with fixed probability $P_a$, independent of $x$, where $\sum_a P_a=1$. Its marginal reward is
\[
\mu^{\mathrm{rand}}
=
\sum_{a=1}^A P_a \mu_a
=
\mathbb{E}_{x}\!\left[
\sum_{a=1}^A P_a \mu_a(x)
\right].
\]
For every context $x$, $\max_a \mu_a(x)\ge
\sum_{a=1}^A P_a \mu_a(x)$,
since the maximum is at least any weighted average. Taking expectations over $x$ gives
\[
\mu^{\phi^\star}
\ge
\mu^{\mathrm{rand}}.
\]

The inequality is strict when the best agent varies across contexts, and the non-contextual policy sometimes assigns tasks to lower-performing agents. Intuitively, contextual assignment improves by routing each context to the agent with the highest expected reward.

A best fixed-agent policy is a special case of non-contextual assignment. It chooses
\[
a^{\max} = \arg\max_{a \in \{1,\dots,A\}} \mu_a
\]
and assigns every task to $a^{\max}$, achieving reward $\mu^{\max} = \max_a \mu_a$.

Since $\max_a \mu_a(x) \ge \mu_{a^{\max}}(x)$ for every $x$, taking expectations gives
\[
\mu^{\phi^\star}
=
\mathbb{E}_x[\max_a \mu_a(x)]
\ge
\mathbb{E}_x[\mu_{a^{\max}}(x)]
=
\mu^{\max}.
\]
Thus, the optimal contextual policy weakly dominates the best fixed-agent policy, with strict improvement whenever some other agent outperforms $a^{\max}$ on a positive-measure set of contexts.

\section{Analysis of  the Two-Agent Case Without Capacity Constraint}
\label{app:two_agent}
\subsection{Optimal Policy}
\label{app:two_agent_optimal_policy_no_constained}
Here, we provide a formal characterization of the two-agent case and quantify the performance gap between contextual and non-contextual policies. Consider two agents with conditional expected rewards $\mu_1(x)$ and $\mu_2(x)$, and define the contextual reward gap
\[
\Delta(x) := \mu_1(x) - \mu_2(x).
\]

The optimal (unconstrained) policy assigns each context to the agent with the higher conditional expected reward,
\[
    \phi^\star(x) = \arg\max_{a \in \{1,2\}} \mu_a(x),
\]
which is equivalently characterized by selecting agent $1$ when $\Delta(x) \ge 0$ and agent $2$ otherwise. This induces decision regions
\[
    \mathcal{R}_1^\star := \{x : \Delta(x) \ge 0\}, 
    \qquad
    \mathcal{R}_2^\star := \{x : \Delta(x) < 0\}.
\]
The marginal reward of the optimal policy is
\[
    \mu^{\phi^\star}
    = \int_{x \in \mathcal{R}_1^\star} \mu_1(x)\, P_X(x)\, dx
    + \int_{x \in \mathcal{R}_2^\star} \mu_2(x)\, P_X(x)\, dx.
\]

\subsection{Optimal Policy Weakly Dominates Non-contextual Policies}
\label{app:two_agent_optimal_over_noncontextual_unconstrained}

Here we compare the optimal contextual policy with the best context-independent policy. Consider the policy that ignores $x$ and always assigns tasks to the agent with the highest marginal reward $ a^{\max} = \arg\max_{a \in \{1,2\}} \mu_a$.
Without loss of generality, suppose $a^{\max}=1$. Then this policy achieves
\[
\mu^{\max}
=
\int \mu_1(x)\,dP_X(x)
=
\mu_1.
\]

The optimal contextual policy weakly dominates this baseline. Its performance gain is
\begin{align*}
\mu^{\phi^\star} - \mu^{\max}
    &= \int_{x \in \mathcal{R}_2^\star} \bigl(\mu_2(x) - \mu_1(x)\bigr)\, P_X(x)\, dx = \int_{x \in \mathcal{R}_2^\star} -\Delta(x)\, P_X(x)\, dx\ge\; 0.
\end{align*}
The inequality holds because on $\mathcal{R}_2^\star=\{x:\Delta(x)<0\}$ (i.e.,  agent $2$ outperforms agent $1$). Thus, contextual assignment improves over the best fixed-agent policy precisely by switching to agent $2$ on these contexts.

\subsection{Outperforming Individual Agents}
\label{app:two_agent_unconstrained_outperform}

Here, we show that when two agents have complementary expertise across contexts, the combined contextual policy can outperform either individual agent, i.e.,
\[
P_X(\Delta(x)>0)>0
\quad \text{and} \quad
P_X(\Delta(x)<0)>0.
\]
Then the optimal contextual policy strictly outperforms both fixed-agent policies.

From Appendix~\ref{app:two_agent_optimal_over_noncontextual_unconstrained}, if $\mu_1 \ge \mu_2$, the gain over always assigning to agent $1$ is
\[
\mu^{\phi^\star}-\mu_1
=
\int_{\Delta(x)<0}
\bigl(\mu_2(x)-\mu_1(x)\bigr)\,dP_X(x)
>0,
\]
since $\mu_2(x)>\mu_1(x)$ on $\{\Delta(x)<0\}$ and this region has positive probability. Similarly, the gain over always assigning to agent $2$ is
\[
\mu^{\phi^\star}-\mu_2
=
\int_{\Delta(x)\ge 0}
\bigl(\mu_1(x)-\mu_2(x)\bigr)\,dP_X(x)
>0,
\]
since $\mu_1(x)>\mu_2(x)$ on $\{\Delta(x)>0\}$ and this region has positive probability. Therefore,
\[
\mu^{\phi^\star} > \max\{\mu_1,\mu_2\}.
\]
Thus, when each agent has a positive-measure region of relative advantage, the contextual policy can outperform any single agent by routing each context to the better-performing agent. This is conceptually related to dynamic classifier selection, where classifier complementarity across regions of the feature space enables instance-wise selection to improve over individual classifiers \citep{giacinto2000dynamic}.

\subsection{Role of Disagreement and Magnitude of Improvement}
\label{app:two_agent_unconstrained_disagreement}

We now express the improvement from contextual assignment in terms of the disagreement regions. Define
\[
\mathcal{D}_1 := \{x:\Delta(x)>0\},
\qquad
\mathcal{D}_2 := \{x:\Delta(x)<0\},
\]
with
\[
p_1 := P_X(\mathcal{D}_1),
\qquad
p_2 := P_X(\mathcal{D}_2).
\]
On $\mathcal{D}_1$, agent $1$ outperforms agent $2$, while on $\mathcal{D}_2$, agent $2$ outperforms agent $1$. The contextual policy exploits all such disagreement by assigning $\mathcal{D}_1$ to agent $1$ and $\mathcal{D}_2$ to agent $2$.

Relative to always assigning to agent $1$, the improvement is
\[
\mu^{\phi^\star}-\mu_1
=
\int_{\mathcal{D}_2} \bigl(\mu_2(x)-\mu_1(x)\bigr)\,dP_X(x)
=
\int_{\mathcal{D}_2} -\Delta(x)\,dP_X(x).
\]
Similarly, relative to always assigning to agent $2$, the improvement is
\[
\mu^{\phi^\star}-\mu_2
=
\int_{\mathcal{D}_1} \bigl(\mu_1(x)-\mu_2(x)\bigr)\,dP_X(x)
=
\int_{\mathcal{D}_1} \Delta(x)\,dP_X(x).
\]

Equivalently, these improvements can be written as
\[
\mu^{\phi^\star}-\mu_1
=
p_2 \cdot \mathbb{E}\bigl[-\Delta(X)\mid X\in\mathcal{D}_2\bigr],
\]
and
\[
\mu^{\phi^\star}-\mu_2
=
p_1 \cdot \mathbb{E}\bigl[\Delta(X)\mid X\in\mathcal{D}_1\bigr].
\]
Thus, relative to a fixed-agent assignment, the improvement from contextual assignment depends both on the probability mass of contexts in which the fixed agent is worse than the alternative agent and on the magnitude of the contextual reward gap within those contexts.

\section{Optimal Policy under Capacity Constraints}
\label{app:general_case}

Here, we show the optimal policy for $A \ge 2$ agents under capacity constraints and establish its performance relative to non-contextual assignment.

Our goal is
\[
\max_{\phi} \; \mathbb{E}_{x}\!\left[\sum_{a=1}^A P_\phi(a \mid x)\mu_a(x)\right]
\quad \text{s.t.} \quad
\mathbb{E}_{x}[P_\phi(a \mid x)] = \alpha_a, \ \forall a,
\]
where $\sum_{a=1}^A \alpha_a = 1$ and $P_\phi(a \mid x)$ denotes the probability that policy $\phi$ assigns context $x$ to agent $a$, with $\sum_a P_\phi(a \mid x)=1$ for all $x$.

Introduce Lagrange multipliers $\lambda_1,\dots,\lambda_A$ and consider the Lagrangian

\begin{align*}
\mathcal{L}(\phi,\lambda)
&=
\mathbb{E}_{x}\!\left[\sum_{a=1}^A P_\phi(a \mid x)\mu_a(x)\right]
    -
    \sum_{a=1}^A \lambda_a
    \left(
        \mathbb{E}_{x}[P_\phi(a \mid x)] - \alpha_a
    \right) \\
&=
\mathbb{E}_{x}\!\left[\sum_{a=1}^A P_\phi(a \mid x)\mu_a(x)\right]
    -
    \sum_{a=1}^A \lambda_a \mathbb{E}_{x}[P_\phi(a \mid x)]
    +
    \sum_{a=1}^A \lambda_a \alpha_a \\
&=
\mathbb{E}_{x}\!\left[
        \sum_{a=1}^A P_\phi(a \mid x)\bigl(\mu_a(x)-\lambda_a\bigr)
    \right]
    +
    \sum_{a=1}^A \lambda_a \alpha_a.
\end{align*}

For fixed $\lambda$, the second term is constant in $\phi$. Since the objective is an expectation over $x$, the assignment decision at one context does not affect others. Thus, we can determine the optimal assignment separately for each $x$:
\[
\max_{P_\phi(\cdot \mid x)} \sum_{a=1}^A P_\phi(a \mid x)\bigl(\mu_a(x)-\lambda_a\bigr),
\quad \text{s.t.} \ \sum_a P_\phi(a \mid x)=1,
\]
since this objective is linear in $P_\phi(\cdot \mid x)$, an optimum is attained by assigning all probability to an agent with the largest adjusted reward; that is, $P_\phi(a \mid x) \in \{0,1\}$ for all $a$. Therefore, the optimal policy can be written as a deterministic rule:
\[
\phi^\star(x) = \arg\max_{a \in \{1,\dots,A\}} \bigl\{\mu_a(x)-\lambda_a\bigr\}.
\]

There exist constants $\{\lambda_a\}$ such that the resulting policy satisfies the constraints
\[
\mathbb{E}_{x}[P_{\phi^\star}(a \mid x)] = \alpha_a.
\]
Under the continuous-context assumption stated in Section~\ref{subsec:cap_constraint}, this policy can be chosen deterministically while satisfying the capacity constraints exactly.

To compare with the non-contextual assignment, define $g_a(x) := \mu_a(x)-\lambda_a$. Then the optimal policy is
\[
\mu^{\phi^\star}
=
\mathbb{E}_{x}\!\left[\max_a g_a(x)\right]
+
\sum_{a=1}^A \lambda_a \alpha_a,
\]
while the non-contextual policy yields
\[
\begin{aligned}
\mu^{\mathrm{rand}}
&= \sum_{a=1}^A \alpha_a \mu_a  \\
&= \sum_{a=1}^A \alpha_a \mathbb{E}_x[\mu_a(x)] \\
&= \mathbb{E}_x\!\left[\sum_{a=1}^A \alpha_a \mu_a(x)\right] \\
&= \mathbb{E}_x\!\left[\sum_{a=1}^A \alpha_a \bigl(g_a(x)+\lambda_a\bigr)\right] \\
&= \mathbb{E}_x\!\left[\sum_{a=1}^A \alpha_a g_a(x)\right]
+ \sum_{a=1}^A \lambda_a \alpha_a .
\end{aligned}
\]

Since for any numbers $\{z_a\}$ and weights $\{\alpha_a\}$ with $\sum_a \alpha_a=1$,
\[
\max_a z_a \ge \sum_{a=1}^A \alpha_a z_a,
\]
we have pointwise
\[
    \max_a g_a(x)\ge \sum_{a=1}^A \alpha_a g_a(x).
\]
Taking expectations yields $\mu^{\phi^\star} \ge \mu^{\mathrm{rand}}$.

The inequality is strict whenever the best agent varies across contexts, so that the non-contextual policy averages over suboptimal agents while $\phi^\star$ selects the best agent at each context.

Finally, the constants $\lambda_a$ can be interpreted as shadow prices associated with the capacity constraints. They quantify the marginal value of relaxing each constraint and characterize how the optimal policy trades off reward and capacity across agents.

\section{Two-Agent Case under Capacity Constraints}
\label{app:two_agent_capacity}
Throughout this section, we distinguish two cases. We refer to the \emph{non-dominant case} as the setting where two agents exhibit heterogeneous expertise across contexts, i.e., $P_X(\Delta(x)>0)>0$ and $P_X(\Delta(x)<0)>0$. We refer to the \emph{dominant case} as the setting where one agent weakly dominates the other, i.e., $\mu_1(x)\ge \mu_2(x)$ for all $x$.

\subsection{Optimal Policy under Capacity Constraints}
\label{app:two_agent_optimal_policy}
Here we show that the optimal constrained policy assigns agent $1$ to the $\alpha$-mass of contexts with the largest values of $\Delta(x)$, where $\Delta(x)= \mu_1(x)-\mu_2(x)$. 

Let the target workload capacities be $\alpha_1=\alpha$ and $\alpha_2=1-\alpha$. Any policy $\phi$ induces regions
\[
    \mathcal{R}_1 := \{x \in \mathcal{X} : \phi(x)=1\},
    \qquad
    \mathcal{R}_2 := \mathcal{X}\setminus \mathcal{R}_1,
\]
with
\[
    \int_{x \in \mathcal{R}_1} P_X(x)\,dx = \alpha.
\]

Any feasible policy induces a region $\mathcal{R}_1$ with $P_X(\mathcal{R}_1)=\alpha$, assigning agent $1$ to $\mathcal{R}_1$ and agent $2$ to its complement.

The expected reward is
\begin{align*}
    \mu^{\phi}
    &= \int_{x \in \mathcal{R}_1} \mu_1(x) P_X(x)\,dx
     + \int_{x \in \mathcal{R}_2} \mu_2(x) P_X(x)\,dx \\
    &= \mu_2 + \int_{x \in \mathcal{R}_1} \Delta(x) P_X(x)\,dx,
\end{align*}

Thus, the problem reduces to selecting a set $\mathcal{R}_1$ of mass $\alpha$ that maximizes
\[
\int_{\mathcal{R}_1} \Delta(x)\,dP_X(x).
\]

This is solved by choosing the $\alpha$-mass of contexts with the largest values of $\Delta(x)$. Therefore, there exists a threshold $\tau_\alpha$ such that
\[
\mathcal{R}_1^\star = \{x : \Delta(x)\ge \tau_\alpha\}, 
\qquad
\mathcal{R}_2^\star = \mathcal{X}\setminus \mathcal{R}_1^\star.
\]

This threshold characterization is consistent with the general Lagrangian solution in Appendix~\ref{app:general_case}. There, the optimal policy takes the form
\[
\phi^\star(x) = \arg\max_{a \in \{1,2\}} \{\mu_a(x) - \lambda_a\}.
\]
For two agents, this assigns to agent $1$ whenever
\[
\mu_1(x) - \lambda_1 \ge \mu_2(x) - \lambda_2,
\]
which can be rewritten as
\[
\Delta(x) := \mu_1(x) - \mu_2(x) \ge \lambda_1 - \lambda_2.
\]
Thus, the threshold $\tau_\alpha$ corresponds to the shadow-price difference $\lambda_1 - \lambda_2$, determined by the capacity constraint.

The optimal policy therefore assigns agent $1$ to the top $\alpha$-mass of contexts ranked by $\Delta(x)$, and agent $2$ to the remainder. The resulting marginal reward is
\[
\mu^{\phi^\star}
= \mu_2 + \int_{x \in \mathcal{R}_1^\star} \Delta(x) P_X(x)\,dx.
\]

\subsection{Constrained Optimal Policy Weakly Dominates Non-contextual Policy}
\label{app:two_agent_optimal_over_radnom}
Here we establish that this policy weakly dominates the random policy in both the \textit{non-dominant case} and the \textit{dominant case}.
As shown in Appendix \ref{app:two_agent_optimal_policy}, the constrained optimal policy has reward 
\[
    \mu^{\phi^\star}
    = \mu_2 + \int_{x \in \mathcal{R}_1^\star} \Delta(x) P_X(x)\,dx.
\]
By contrast, the non-contextual (i.e., random) policy has reward
\[
    \mu^{\mathrm{rand}}
    = \alpha \mu_1 + (1-\alpha)\mu_2
    = \mu_2 + \alpha \int \Delta(x) P_X(x)\,dx.
\]
Then, the degree to which the optimal policy improves over the random policy is 
\begin{align*}
    \mu^{\phi^\star} - \mu^{\mathrm{rand}}
    &= \int_{x \in \mathcal{R}_1^\star} \Delta(x) P_X(x)\,dx
     - \alpha \int \Delta(x) P_X(x)\,dx \\
    &= (1-\alpha)\int_{x \in \mathcal{R}_1^\star} \Delta(x) P_X(x)\,dx
     - \alpha \int_{x \in \mathcal{R}_2^\star} \Delta(x) P_X(x)\,dx
    \;\ge\; 0.
\end{align*}

\subsection{Outperforming Individual Agents}
\label{app:two_agent_outperform}
Here, we show that, for the \textit{non-dominant case}, the combined policy can outperform either individual agent when performance varies across contexts.

Suppose
\[
P_X(\Delta(x)>0)>0
\quad \text{and} \quad
P_X(\Delta(x)<0)>0.
\]

Let $\alpha = P_X(\Delta(x)\ge 0)$. Then the optimal set becomes 
\[
\mathcal{R}_1^\star = \{x:\Delta(x)\ge 0\},
\]
so each context is assigned to the better-performing agent.

The resulting policy assigns each context $x$ to the better-performing agent. Its reward becomes:
\begin{align*}
\mu^{\phi^\star}
& =
\int_{\Delta(x)\ge 0} \mu_1(x)\,dP_X(x)
+
\int_{\Delta(x)<0} \mu_2(x)\,dP_X(x)\\
& =\mu_1
+
\int_{\Delta(x)<0} \bigl(\mu_2(x)-\mu_1(x)\bigr)\,dP_X(x) \\
&>
\mu_1,
\end{align*}

since $\mu_2(x)>\mu_1(x)$ on $\{\Delta(x)<0\}$.

Similarly: 
\[
    \mu^{\phi^\star}
    = \mu_2 + \int_{\Delta(x)\ge 0} \bigl(\mu_1(x)-\mu_2(x)\bigr) P_X(x)\,dx
    > \mu_2.
\]

so
\[
\mu^{\phi^\star} > \max\{\mu_1,\mu_2\}.
\]

\subsection{Role of Disagreement and Magnitude of Improvement}
\label{app:two_agent_disagreement}

Define the disagreement region between two agents
\[
\mathcal{D} := \{x \in \mathcal{X} : \Delta(x) \neq 0\},
\]
and partition it into
\[
\mathcal{D}_1 := \{x : \Delta(x) > 0\}, \qquad
\mathcal{D}_2 := \{x : \Delta(x) < 0\}.
\]
Let
\[
p_1 := P_X(\mathcal{D}_1), \quad
p_2 := P_X(\mathcal{D}_2), \quad
t := P_X(\mathcal{D}) = p_1 + p_2.
\]

On $\mathcal{D}_1$, agent 1 strictly outperforms agent 2, while the reverse holds on $\mathcal{D}_2$. The optimal policy exploits this heterogeneity by allocating capacity to contexts with the largest values of $\Delta(x)$, whereas the non-contextual random policy ignores this structure. As shown in Appendix \ref{app:two_agent_optimal_over_radnom}, the improvement over the non-contextual policy is
\[
\mu^{\phi^\star} - \mu^{\mathrm{rand}}
=
\int_{\mathcal{R}_1^\star} \Delta(x)\,dP_X(x)
-
\alpha \int \Delta(x)\,dP_X(x)
\;\ge\; 0.
\]

When capacity allows full exploitation of disagreement, i.e., $p_1 \le \alpha \le 1-p_2$, the policy assigns $\mathcal{D}_1$ to agent $1$ and $\mathcal{D}_2$ to agent $2$.

In this case, the improvement over the non-contextual random policy can be written as\[\mu^{\phi^\star} - \mu^{\mathrm{rand}}=(1-\alpha)\int_{\mathcal{D}_1}\Delta(x)\,dP_X(x)+\alpha\int_{\mathcal{D}_2}-\Delta(x)\,dP_X(x).\]Equivalently,\[\mu^{\phi^\star} - \mu^{\mathrm{rand}}=(1-\alpha)p_1\cdot\mathbb{E}\bigl[\Delta(X)\mid X\in\mathcal{D}_1\bigr]+\alpha p_2\cdot\mathbb{E}\bigl[-\Delta(X)\mid X\in\mathcal{D}_2\bigr].\]Thus, the magnitude of improvement is determined by both the size of the disagreement regions and the average contextual reward gap within those regions, weighted by the capacity $\alpha$.

\subsection{Equivalence of Learned and Random Non-Contextual Policies}
\label{app:noncontextual-equivalence}

Consider any non-contextual policy whose assignment decision does not depend on the current context $x_t$. This includes policies that learn only marginal agent performance, for example, policies that maintain estimates $\hat{\mu}_{a,t}$ of $\mu_a$ and choose
\[
a_t = \arg\max_a \left\{\hat{\mu}_{a,t}-\eta Q_{t,a}\right\}.
\]

Let $H_t$ denote the history before the reward at round $t$. Because the policy is non-contextual, conditioning on $a_t=a$ does not change the distribution of $x_t$. Therefore, 
\[
\mathbb{E}[r_{t,a} \mid H_t, a_t=a]
=
\mathbb{E}_{x_t}[\mathbb{E}[r_{t,a}\mid x_t]]
=
\mathbb{E}_{x_t}[\mu_a(x_t)]
=
\mu_a.
\]
So we have,
\[
\mathbb{E}[r_{t,a_t} \mid H_t]
=
\sum_a \Pr(a_t=a \mid H_t)\mu_a.
\]
Averaging over time gives
\[
\frac{1}{T}\sum_{t=1}^T \mathbb{E}[r_{t,a_t}]
=
\sum_a
\left(
\frac{1}{T}\sum_{t=1}^T \Pr(a_t=a)
\right)\mu_a.
\]
As the capacity mechanism enforces the long-run assignment capacities
\[
\frac{1}{T}\sum_{t=1}^T \Pr(a_t=a) \to \alpha_a,
\]
then the limiting expected reward of the learned non-contextual policy is
\[
\sum_a \alpha_a \mu_a.
\]

The random non-contextual baseline assigns agent $a$ with probability $\alpha_a$, independently of $x_t$, and therefore has the same expected reward:
\[
\mathbb{E}[r_{t,a_t}]
=
\sum_a \alpha_a \mathbb{E}_{x_t}[\mu_a(x_t)]
=
\sum_a \alpha_a \mu_a.
\]
Thus, under the same long-run assignment capacities, learned and random non-contextual policies are asymptotically equivalent in expected reward. Finite-sample realizations may differ due to initialization, queue transients, tie-breaking, and reward noise, but without conditioning on $x_t$, learned marginal accuracies cannot improve performance through context-dependent sorting.

\section{Contextual Reward Model Implementation Details}
\label{app:context_model_para_update}
This section provides implementation details for the contextual reward model updates used in the online assignment procedure described in Section~\ref{subsec:virtual_quque}.

\begin{algorithm}[htbp]
\caption{Online capacity-constrained assignment}
\label{alg:online_assignment}
\begin{algorithmic}[1]
\STATE \textbf{Input:} Capacities $\{\alpha_a\}$, queue penalty $\eta$
\STATE Initialize estimates $\mu_{a,1}(\cdot)$ and queues $Q_{1,a}=0$ for all $a$
\FOR{$t=1,2,\dots$}
    \STATE Observe context $x_t$
    \STATE Select agent $a_t = \arg\max_a \bigl\{ \mu_{a,t}(x_t) - \eta Q_{t,a} \bigr\}$
    \STATE Observe reward $r_{t,a_t}$ and update context models and posterior distributions 
    \STATE Update queues $Q_{t+1,a} = \bigl[ Q_{t,a} + \mathbbm{1}\{a_t=a\} - \alpha_a \bigr]_+$
\ENDFOR
\end{algorithmic}
\end{algorithm} 

\subsection{Logistic Contextual Model: Greedy and Thompson Sampling}
\label{app:logistics_posterior_update}

Here we describe the posterior approximation and online update procedure used for Thompson sampling in logistic models. For each agent $a$, rewards are modeled as
\[
r_{t,a} \sim \mathrm{Bernoulli}(\sigma(\theta_a^\top x_t)),
\]
where $\sigma(z) = (1+e^{-z})^{-1}$.

For each agent $a$, let $\mathcal{D}_{a,t}=\{(x_s,r_{j,a}) : s \le t,\ a_j=a\}$ denote the bandit feedback observed for that agent. Given $\mathcal{D}_{a,t}$, the posterior over $\theta_a$ is intractable. Following \citet{chapelle2011empirical}, we use a Laplace approximation centered at the maximum a posteriori (MAP) estimate. The posterior is approximated as
\[
\theta_a \mid \mathcal{D}_{a,t} \approx \mathcal{N}(\hat{\theta}_{a,t}, \Sigma_{a,t}),
\]
where $\hat{\theta}_{a,t}$ is the MAP estimate and $\Sigma_{a,t}$ is the inverse Hessian of the negative log-posterior.

To enable efficient online updates, we maintain $\hat{\theta}_{a,t}$ and $\Sigma_{a,t}$ incrementally. For a new observation $(x_t, r_{t,a})$, define
\[
\hat{p}_t = \sigma(\hat{\theta}_{a,t}^\top x_t), \quad w_t = \hat{p}_t(1 - \hat{p}_t).
\]
The covariance is updated using a Sherman--Morrison step:
\[
\Sigma_{a,t+1}
=
\Sigma_{a,t}
-
\frac{w_t \Sigma_{a,t} x_t x_t^\top \Sigma_{a,t}}{1 + w_t x_t^\top \Sigma_{a,t} x_t},
\]
and the mean is updated as
\[
\hat{\theta}_{a,t+1}
=
\hat{\theta}_{a,t}
+
\Sigma_{a,t+1}(r_{t,a} - \hat{p}_t)x_t.
\]

At each round, Thompson sampling draws
\[
\tilde{\theta}_{a,t} \sim \mathcal{N}(\hat{\theta}_{a,t}, \kappa^2 \Sigma_{a,t}),
\]
where $\kappa > 0$ controls exploration, and finally uses
\[
\mu_{a,t}(x_t) = \sigma(\tilde{\theta}_{a,t}^\top x_t)
\]
for decision making.

In our implementation, we initialize the posterior mean at zero, $\hat{\theta}_{a,1}=0$, and the covariance as $\Sigma_{a,1} = \gamma_{\mathrm{prior}}^{-1} I$.
We set $\gamma_{\mathrm{prior}}=1$ and use exploration scale $\kappa=0.5$ for Thompson sampling. For numerical stability, the covariance matrix is symmetrized after each update, and $w_t$ is clipped to the interval $[10^{-4},0.25]$.

For the greedy logistic policy, we use the same online MAP update for $\hat{\theta}_{a,t}$, but do not sample from the approximate posterior. Instead, the policy uses the plug-in estimate $\mu_{a,t}(x_t)=\sigma(\hat{\theta}_{a,t}^\top x_t)$ for decision making.

\subsection{Tree-Based Contextual Model: Greedy and Bootstrap Thompson Sampling}
\label{app:tree_contextual_model}

Here we provide the implementation details of the tree-based contextual reward model used in our online policies. For each agent $a$, we maintain an ensemble of regression trees that predicts the realized reward $r_{t,a}$ from the context $x_t$. The model is updated online using only the observations assigned to that agent.

For each agent, we use an ensemble of $B=20$ decision-tree regressors, each with maximum depth $3$ and minimum leaf size $10$. The model stores all observed pairs $(x_t,r_{t,a})$ for that agent. To reduce computational cost, the ensemble is refit every 20 observed updates rather than after every round. At each refit, each tree is trained on a bootstrap sample of the agent's observed data.

For the greedy tree policy, the estimated reward is the average prediction across trees: $\mu_{a,t}(x_t)=\frac{1}{B}\sum_{b=1}^{B} f^{(b)}_{a,t}(x_t)$,
where $f^{(b)}_{a,t}$ denotes the $b$-th tree for agent $a$ at time $t$. Before the ensemble is first fitted, we use a prior mean of $0.5$.

For Thompson sampling, we approximate posterior sampling using bootstrap randomness. At each round $t$, we sample one tree uniformly at random from the ensemble and use its prediction as the sampled reward estimate: $\tilde{\mu}_{a,t}(x_t)=f^{(b)}_{a,t}(x_t), b \sim \mathrm{Uniform}\{1,\dots,B\}.$. If the ensemble has not yet been fitted, we sample an initial reward estimate uniformly from $[0,1]$. All tree predictions are clipped to lie in $[0,1]$.

\section{Capacity Constraints and Queue-Based Implementation}
\label{app:capacity_constraints}

Here, we provide a detailed description of the capacity-constrained assignment problem and the queue-based method used to enforce long-run constraints. Let $\alpha_a \in [0,1]$ denote the target long-run fraction of tasks assigned to agent $a$, with $\sum_{a=1}^A \alpha_a = 1$. For each round $t$, define the assignment indicator
$I_{t,a} := \mathbbm{1}\{a_t=a\}$, which equals one if task $t$ is assigned to agent $a$. Following the long-term time-average constraint framework of \citet{neely2010stochastic}, the policy is required to satisfy
\[
\limsup_{T \to \infty} \frac{1}{T}\sum_{t=1}^T \mathbb{E}[I_{t,a}] \le \alpha_a,
\qquad a=1,\dots,A.
\]
When the full workload is allocated each round and $\sum_{a=1}^A \alpha_a = 1$, these upper-bound constraints bind at optimum, so the long-run expected fraction of tasks handled by agent $a$ matches $\alpha_a$.

To enforce these constraints in an online setting, we introduce a virtual queue $Q_{t,a}$ (notation first proposed by \citep{neely2010stochastic}) for each agent, which tracks the cumulative deviation between realized assignments and the target capacity. The queue evolves as
\[
Q_{t+1,a} = \bigl[ Q_{t,a} + I_{t,a} - \alpha_a \bigr]_+, \quad a= 1, .., A,
\]
where $[\cdot]_+ = \max(\cdot,0)$. $Q_{t,a}$ increases when agent $a$ is over-assigned relative to its target capacity, and remains small when assignments are balanced.

At each round $t$, the agent is selected by trading off predicted reward and queue pressure:
\[
\phi(x_t) = \arg\max_{a \in \{1,\dots,A\}} \left\{ \mu_{a,t}(x_t) - \eta Q_{t,a} \right\},
\]
where $\mu_a(x_t)$ is the agent $a$'s expected reward on context $x_t$ and $\eta \ge 0$ controls the trade-off between reward maximization and constraint satisfaction. Agents with larger queues are penalized, which encourages the policy to rebalance assignments over time while still prioritizing high-reward decisions. In this sense, the queue lengths act as dynamic shadow prices for capacity constraints.

\section{Regret Bounds for Penalized Contextual Thompson Sampling}
\label{app:regret}
We define the modified expected reward for assigning task $t$ to agent $a$ as $\tilde{\mu}_a(\tilde{x}_t) = \mu_a(x_t) - \eta Q_{t,a}$, where $\tilde{x}_t = (x_t, Q_{t,1}, \dots, Q_{t,A})$ serves as an augmented context. 

Let $\theta^* = \{\theta_1^*, \dots, \theta_A^*\}$ denote the true, unknown parameters governing the agents' conditional expected rewards (e.g., the coefficients in the logistic model $\mu_a(x) = \sigma((\theta_a^*)^\top x)$).

We consider the modified Bayesian regret $\tilde{R}_T$, which measures the expected shortfall in the modified reward relative to an optimal policy that knows the true underlying agent performance parameters \(\theta^*\):

\[\tilde{R}_T = \mathbb{E}\left[ \sum_{t=1}^T \left( \max_{a} \tilde{\mu}_a(\tilde{x}_t; \theta^*) - \tilde{\mu}_{a_t}(\tilde{x}_t; \theta^*) \right) \right]\]

\begin{proposition}
\textbf{\textup{(Inheritance of Bayesian Regret Bounds for \\ Constrained Thompson Sampling)}} \hfill \\

Consider the capacity-constrained contextual multi-armed bandit problem where tasks are assigned via the modified Thompson Sampling rule $a_t = \arg\max_{a} \{ \mu_{a,t}(x_t) - \eta Q_{t,a} \}$, with $\mu_{a,t}(\cdot)$ parameterized by $\theta_t$ sampled from the posterior distribution $M_t$. 

The modified Bayesian regret $\tilde{R}_T$ satisfies the information-theoretic bounds established by \cite{neu2022lifting} for standard Thompson Sampling (bounded by $\mathcal{O}(\sqrt{A T H(\theta^*)})$ for finite parameter spaces).

\end{proposition}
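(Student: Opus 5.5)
The plan is to reduce the constrained problem to a standard contextual Thompson sampling problem over the augmented context $\tilde{x}_t=(x_t,Q_{t,1},\dots,Q_{t,A})$, and then invoke the information-theoretic analysis of \citet{neu2022lifting} essentially verbatim.

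\textbf{Augmented contexts.} First I would observe that the queue vector $Q_t$ is a deterministic function of past actions, and hence $\mathcal{F}_{t-1}$-measurable (where $\mathcal{F}_{t-1}$ is the history up to round $t$). The penalty $\eta Q_{t,a}$ is a known offset that does not depend on $\theta^*$. As a result:
\begin{itemize}
\item The modified reward $\tilde{\mu}_a(\tilde{x}_t;\theta^*)=\mu_a(x_t;\theta^*)-\eta Q_{t,a}$ is a known transformation of the unknown mean reward.
\item The posterior $M_t$ over $\theta^*$ is unchanged by conditioning on $Q_t$, since the likelihood of observed rewards involves only $\mu_{a}(x_s;\theta^*)$.
\item The sampling rule $a_t=\arg\max_a\{\mu_a(x_t;\theta_t)-\eta Q_{t,a}\}$ with $\theta_t\sim M_t$ is exactly Thompson sampling for the modified bandit. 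Given $\mathcal{F}_{t-1}$ and $\tilde x_t$, the optimal action $a^*_t$ and the chosen action $a_t$ are identically distributed.
\end{itemize}
The one place contexts are not i.i.d.\ is that $Q_t$ depends on the history. The \citet{neu2022lifting} framework, however, allows contexts chosen by an adversary or adapted to the history, provided they are revealed before the action. I would check that their decomposition only uses this conditional (round-wise) structure.

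\textbf{Regret decomposition.} Next I would write the per-round modified regret as
\[
\mathbb{E}_t\!\left[\tilde\mu_{a_t^*}(\tilde x_t;\theta^*)-\tilde\mu_{a_t}(\tilde x_t;\theta^*)\right].
\]
The offsets cancel in this difference, because $\eta Q_{t,a}$ appears identically under $\theta^*$ and under a posterior sample once the same arm is fixed.

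Using the posterior-matching property, I would bound this quantity by the lifted information ratio times the square root of the information gain about $\theta^*$ from $(a_t,r_{t,a_t})$. Rewards are bounded in $[0,1]$, and the offset does not change the reward range seen by the learner relative to the true parameter. The lifted information ratio of \citet{neu2022lifting} is therefore at most of order $A$, for the same reason as in the unconstrained case.

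Summing over $t$ with Cauchy–Schwarz and the chain rule for mutual information, the total information gain $\sum_t I_t(\theta^*;\cdot)$ is bounded by $H(\theta^*)$ for finite parameter spaces. This gives $\tilde R_T\le\sqrt{A\,T\,H(\theta^*)}$ up to constants.

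\textbf{Main obstacle.} The hard step is justifying that the history-dependent augmented context, together with the argmax over penalized scores, fits their assumptions. Two points need care:
\begin{itemize}
\item Their bound must hold conditionally on arbitrary $\mathcal{F}_{t-1}$-measurable contexts, not only i.i.d.\ ones.
\item The information ratio bound must be unaffected when the best arm is defined with respect to the shifted scores.
\end{itemize}
I would try to handle both by treating each round's penalty as part of a context-dependent reward map $\theta\mapsto(\mu_a(x;\theta)-c_a)_a$ with known shifts $c_a$. This is a special case of their general contextual setup, since their proof only uses the map $\theta\mapsto$ optimal action and the reward distributions.

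Note that this bounds regret against the myopic penalized benchmark $\max_a\tilde\mu_a$, not against the constrained optimum $\phi^\star$ of Appendix~\ref{app:general_case}. I would state explicitly that the proposition claims only this, and no more.
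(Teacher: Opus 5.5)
Your proposal is correct and follows essentially the same route as the paper: it treats $\tilde{x}_t=(x_t,Q_t)$ as an $\mathcal{F}_{t-1}$-predictable, adaptive-adversary context, notes that the known penalty $\eta Q_{t,a}$ leaves the posterior over $\theta^*$ unchanged so the rule is exactly Thompson sampling on the modified reward, and then invokes \citet{neu2022lifting}. Your extra explanation of why the lifted information ratio is unaffected is a useful addition that the paper leaves implicit; note, however, that the offsets cancel only after posterior matching rewrites $\mathbb{E}_t[\tilde\mu_{a_t^*}(\tilde x_t;\theta^*)]$ as $\mathbb{E}_t[\tilde\mu_{a_t}(\tilde x_t;\theta_t)]$, not in the raw difference you display.
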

\label{prop:regret}

\textbf{Proof: }\\

In our framework, the virtual queues $Q_t = (Q_{t,1}, \dots, Q_{t,A})$ are updated deterministically based on the history $\mathcal{F}_{t-1} = \sigma(x_1, a_1, r_1, \dots, x_{t-1}, a_{t-1}, r_{t-1})$. Consequently, the augmented context $\tilde{x}_t = (x_t, Q_t)$ is perfectly predictable given $\mathcal{F}_{t-1}$ and the newly arrived context $x_t$. Because $Q_t$ depends on past actions and outcomes, $\tilde{x}_t$ can be viewed as having been generated by an adaptive adversary, a setting explicitly supported by the analysis of \cite{neu2022lifting}. 

Crucially, the unknown parameter $\theta^*$ governing agent expertise solely determines the binary task reward $r_{t,a}$. The queue penalty term $\eta Q_{t,a}$ is entirely known to the decision-maker and is independent of $\theta^*$. Because of this separation, the posterior distribution over $\theta^*$ given $\mathcal{F}_{t-1}$ is completely identical to the posterior in an unconstrained setting. At each round $t$, our modified algorithm draws $\theta_t \sim M_t$ and selects the agent that maximizes the sampled modified reward:

\[a_t = \arg\max_a \{ \mu_a(x_t; \theta_t) - \eta Q_{t,a} \} = \arg\max_a \tilde{\mu}_a(\tilde{x}_t; \theta_t)\]

This constitutes the exact, standard Thompson Sampling decision rule applied to the augmented context $\tilde{x}_t$ and the modified reward function $\tilde{\mu}_a(\cdot)$. Therefore, the sequence of decisions, posterior updates, and context observations satisfies the structural assumptions of contextual Thompson Sampling mapped out by \cite{neu2022lifting}. The $\mathcal{O}(\sqrt{A T H(\theta^*)})$ Bayesian regret bound derived using their lifted information ratio technique, therefore, applies directly to the modified expected regret $\tilde{R}_T$.

A similar regret analysis for bandit algorithms with virtual queues appears in \citet{li2019combinatorial}.

\section{Experiment Details and Additional Results}
\label{app:exp}
All experiments are run on 2 x AMD Epyc 7313 (16 cores) CPUs. No GPU or major RAM was required.

\subsection{Dataset Details}
\label{app:dataset}

Here we provide additional details on the datasets used in Section~\ref{sec:exp}. The tabular datasets are publicly available from the following sources: 
\href{https://archive.ics.uci.edu/dataset/222/bank+marketing}{Bank Marketing} (i.e., \textbf{Bank}),
\href{https://archive.ics.uci.edu/dataset/350/default+of+credit+card+clients}{Default of Credit Card Clients} (i.e., \textbf{Credit}),
\href{https://archive.ics.uci.edu/dataset/603/in+vehicle+coupon+recommendation}{Vehicle Coupon Recommendation} (i.e., \textbf{Coupon}), and
\href{https://www.kaggle.com/datasets/sulianova/cardiovascular-disease-dataset}{Cardiovascular Disease} (i.e., \textbf{Cardio}), \href{https://huggingface.co/datasets/cais/mmlu}{\textbf{MMLU}},\href{https://osf.io/2ntrf/overview?view_only=9ec9cacb806d4a1ea4e2f8acaada8f6c}{\textbf{ImageNet16H}} , and \href{https://camelyon17.grand-challenge.org/Data/}{\textbf{Camelyon17}}.

Table~\ref{apptab:datasets} summarizes the datasets and agents' marginal performance.  
\begin{table}[htbp]
\centering
\caption{Summary of datasets and agent performance. We use a subsample of each dataset for online evaluation under a train–test split; the resulting number of timesteps/tasks is denoted by $T$ and varies across datasets.}
\label{apptab:datasets}
\begin{tabular}{lcccc}
\toprule
Dataset & \# Features & \# Timesteps/Tasks & Agent 1 Acc. & Agent 2 Acc. \\
\midrule
\textbf{Bank}   & 15 & 12,357 & 0.91 & 0.53 \\
\textbf{Credit} & 23 & 9,000  & 0.82 & 0.67 \\
\textbf{Coupon} & 24 & 3,624  & 0.72 & 0.57 \\
\textbf{Cardio} & 11 & 21,000 & 0.74 & 0.58 \\
\textbf{MMLU}    & 1 & 217    & 0.32   & 0.47   \\
\textbf{ImageNet16H} & 1 & 2000 (200 images × duplicated 10x)
 & 0.77 & 0.86 \\
\textbf{Camelyon17} & 1 & 2000 & 0.69 & 0.69 \\
\bottomrule
\end{tabular}
\end{table}

For the \textbf{MMLU} experiments, we use multiple choice questions from 2 categories: \textit{college chemistry} and \textit{US foreign policy}. We use the \href{https://huggingface.co/sentence-transformers/all-MiniLM-L6-v2}{sentence-transformers/all-MiniLM-L6-v2} model from HuggingFace to transform the text to a 384-dimensional embedding. We then apply principal component analysis to the high-dimensional embedding to identify the first principal component. This is used as the single context feature for the learning policies. The reward is binary since the agent is either correct or incorrect.

For \textbf{ImageNet16H}, we use a VGG-19 Convolutional Neural Network fine-tuned for 10 epochs as one arm, and the predictions of Human 48 as the other. This is a multi-class classification problem with 16 classes. We assume symmetric misclassification utilities, which allows us to use a binary reward signal for our policies. We use \textit{phase noise}, a feature in the dataset, as our singular context feature. 

We train two linear classifiers as the two arms for \textbf{Camelyon17}, using data from two different hospitals to induce distribution shift and heterogeneous expertise. The binary classification task is detecting the presence/absence of a tumor. We extract 21 features from the pixels in the raw image. We identify the first principal component and use that as context in our model. The features used for the \textbf{Camelyon17} dataset are as follows:

\begin{enumerate}
    \item Mean pixel intensity (R channel)
    \item Standard deviation of pixel intensity (R channel)
    \item 10th percentile of pixel intensity (R channel)
    \item 90th percentile of pixel intensity (R channel)
    \item Maximum pixel intensity (R channel)
    \item Mean pixel intensity (G channel)
    \item Standard deviation of pixel intensity (G channel)
    \item 10th percentile of pixel intensity (G channel)
    \item 90th percentile of pixel intensity (G channel)
    \item Maximum pixel intensity (G channel)
    \item Mean pixel intensity (B channel)
    \item Standard deviation of pixel intensity (B channel)
    \item 10th percentile of pixel intensity (B channel)
    \item 90th percentile of pixel intensity (B channel)
    \item Maximum pixel intensity (B channel)
    \item Whiteness fraction (proportion of pixels with R, G, B $> 220$; proxy for background)
    \item A proxy for Hematoxylin: mean of $B - R$ (purpleness; indicative of nuclear staining)
    \item A proxy for Eosin: mean of $R - B$ (pinkness; indicative of cytoplasm staining)
    \item Mean intensity across all channels
    \item Colour contrast: standard deviation of grayscale intensity
    \item Mean G channel intensity
\end{enumerate}

For four tabular datasets (\textbf{Bank}, \textbf{Credit}, \textbf{Coupon}, \textbf{Cardio}), we construct two classifier agents as two arms. Agent 1 is a logistic regression model trained on a randomly selected subset of up to 20\% of the available features, while Agent 2 is an XGBoost classifier trained on the remaining features. This asymmetric feature access and model class induce differences in marginal accuracy and context-dependent expertise. The contextual assignment policy observes the full feature space and uses observed rewards to learn which agent is better suited for each context. For the logistic regression agent, we use the \texttt{lbfgs} solver with balanced class weights and a maximum of 1000 iterations. For the XGBoost agent, we use 200 trees with maximum depth 4, learning rate 0.05, subsampling rate 0.8, column subsampling rate 0.8, and log-loss evaluation, with the random seed fixed to the experiment seed.

\subsection{Additional Results: Tabular Datasets Across Random Seeds}
\label{app:tab_more_seeds}

We provide additional robustness checks using multiple random seeds. We use the same agent-construction procedure as in Section~\ref{subsec:main_results}: for each dataset, one agent is a logistic regression model trained on a random subset of up to 20\% of the features, while the other is an XGBoost classifier trained on the remaining features.

While Figure~\ref{fig:main_results} reports results for one seed, Table~\ref{apptab:multi_seed_results} reports results across additional seeds for \textbf{Bank}, \textbf{Credit}, \textbf{Coupon}, and \textbf{Cardio}. Across seeds, online contextual policies consistently outperform the non-contextual baseline. Tree-based greedy policies generally achieve comparable or lower error than logistic greedy policies. The magnitude of the improvement varies across seeds because the induced agent expertise depends on the random feature split; when the two agents have more similar expertise, the gains from contextual assignment are smaller.

\begin{table}[htbp]
\centering
\caption{Error rates across datasets and seeds, average over 10 runs, for fixed capacity with two agents $\alpha1=\alpha2=0.5$}
\label{apptab:multi_seed_results}
\begin{tabular}{llccc}
\toprule
Dataset & Seed & Non-contextual & Logistic Greedy & Tree Greedy \\
\midrule
\multirow{5}{*}{Bank}
& 0 & 0.128 & 0.122 & 0.113 \\
& 1 & 0.136 & 0.129 & 0.117 \\
& 2 & 0.278 & 0.195 & 0.171 \\
& 3 & 0.260 & 0.187 & 0.180 \\
& 4 & 0.124 & 0.097 & 0.097 \\
\midrule
\multirow{5}{*}{Credit}
& 0 & 0.195 & 0.190 & 0.188 \\
& 1 & 0.194 & 0.190 & 0.188 \\
& 2 & 0.256 & 0.241 & 0.222 \\
& 3 & 0.257 & 0.242 & 0.224 \\
& 4 & 0.197 & 0.192 & 0.188 \\
\midrule

\multirow{5}{*}{Coupon}
& 0 & 0.352 & 0.343 & 0.342 \\
& 1 & 0.369 & 0.357 & 0.360 \\
& 2 & 0.357 & 0.335 & 0.341 \\
& 3 & 0.347 & 0.331 & 0.340 \\
& 4 & 0.349 & 0.341 & 0.345 \\
\midrule

\multirow{5}{*}{Cardio}
& 0 & 0.334 & 0.332 & 0.309 \\
& 1 & 0.373 & 0.354 & 0.309 \\
& 2 & 0.338 & 0.335 & 0.306 \\
& 3 & 0.340 & 0.338 & 0.307 \\
& 4 & 0.343 & 0.338 & 0.311 \\

\bottomrule
\end{tabular}
\end{table}

\subsection{Additional Results: Multi-Agent Experiments with Capacity Constraints}
\label{app:multi-agent}
We extend the two-agent setting to multi-agent assignment and evaluate the proposed policies on the \textbf{MMLU} dataset. Specifically, we consider settings with $A=3$ agents in Figure~\ref{fig:3agent} and $A=5$ agents in Figure~\ref{fig:5agent}. In each experiment, we vary the capacity assigned to one agent, LLaMa-2-70B, and allocate the remaining capacity equally across the other agents.

Across both multi-agent settings, the contextual policies consistently achieve lower average error rates than the non-contextual assignment baseline. These results show that the benefits of contextual assignment are not limited to the two-agent case: when capacity must be allocated across multiple agents, learning context-dependent expertise still enables the policy to better match tasks to agents while respecting capacity constraints. This provides additional empirical support for the theoretical analysis in Section~\ref{subsec:cap_constraint} and the main results in Section~\ref{subsec:main_results}.

\begin{figure}[htbp]
    \centering
    \begin{subfigure}[b]{0.49\textwidth}
        \centering
        \includegraphics[width=\textwidth]{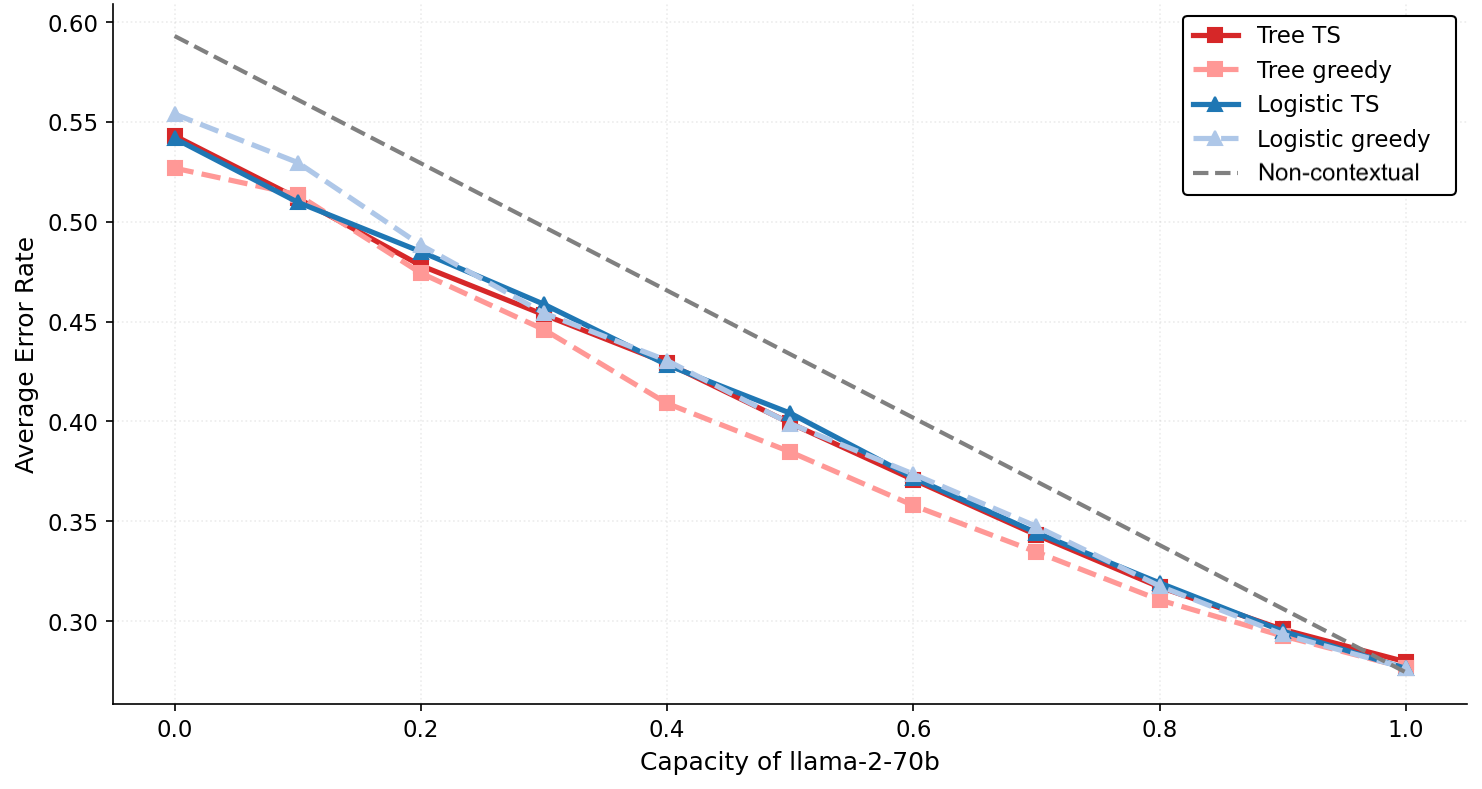}
        \caption{}
        \label{fig:3agent}
    \end{subfigure}
    \hfill 
    \begin{subfigure}[b]{0.49\textwidth}
        \centering
        \includegraphics[width=\textwidth]{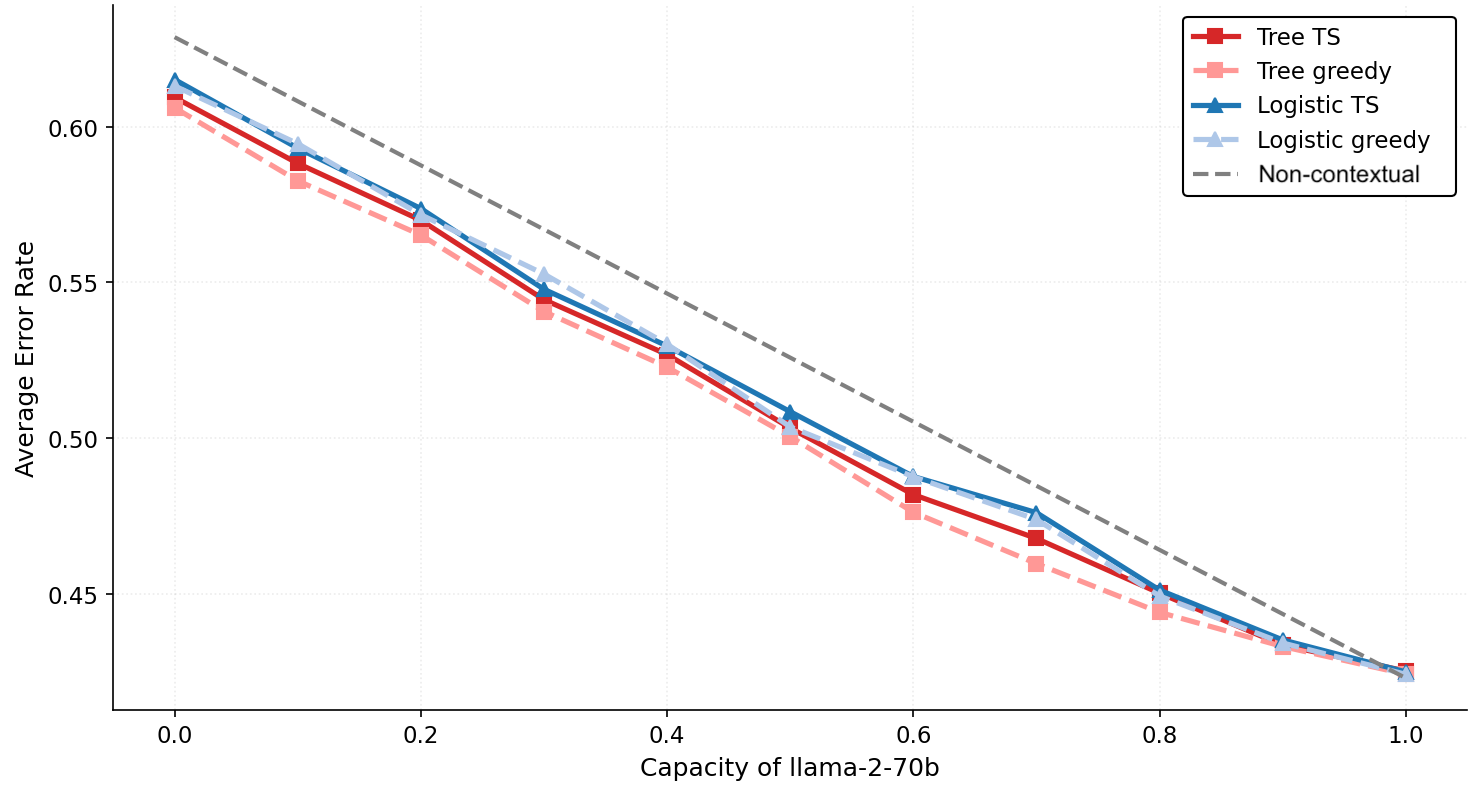}
        \caption{}
        \label{fig:5agent}
    \end{subfigure}
    \caption{Multi-agent assignment results on \textbf{MMLU}. (a) evaluates three LLM agents (Cohere Small (20220720), LLaMa-13B, and LLaMa-2-70B) on College Chemistry, Computer Security, and US Foreign Policy. (b) evaluates five LLM agents (Cohere Small (20220720), LLaMa-13B, LLaMa-2-70B, command-medium, and t0pp) on Abstract Algebra, College Chemistry, Computer Security, Econometrics, and US Foreign Policy. Results are averaged over 25 runs.}

    \label{fig:combined_agents}
\end{figure}

\subsection{Additional Results: Extensions with an Additional Unconstrained Agent}
\label{app:with_free_agent}
In some settings, only a subset of agents face meaningful capacity constraints. For example, in a hospital setting, two doctors may have limited bandwidth, while a machine learning model can process cases at negligible marginal cost. To capture this setting, we consider an extension with two constrained agents and one unconstrained ``free'' agent. Let agents $a \in \{1,2\}$ be capacity-constrained, with target capacities $\alpha_1$ and $\alpha_2$, where $\alpha_1+\alpha_2=1$, and let agent $a=3$ denote the unconstrained agent. The online assignment rule remains as \ref{eq:agent_select_w_cap} that
\[
a_t = \arg\max_{a \in \{1,2,3\}} \left\{ \mu_{a,t}(x_t) - \eta Q_{t,a} \right\},
\]
where $\mu_{a,t}(x_t)$ is the current estimate of agent $a$'s expected reward for context $x_t$. For the constrained agents, the queues evolve as
\[
Q_{t+1,a} = \bigl[Q_{t,a} + \mathbbm{1}\{a_t=a\} - \alpha_a\bigr]_+,
\qquad a \in \{1,2\}.
\]
The unconstrained agent has no capacity penalty, which is equivalent to setting $Q_{t,3}=0 \quad \text{for all } t$.
Thus, the free agent enters the same assignment rule as the constrained agents, but without a queue penalty (i.e., $a_t
=
\arg\max_{a \in \{1,2,3\}}
\left\{
\mu_{a,t}(x_t) - \eta Q_{t,a}
\right\}, Q_{t,3}=0$).

In our implementation, the constrained agents are updated exactly as in the main setting, and the free agent's reward model is updated whenever it is selected. The capacity queues are updated only for the constrained agents. This design allows the policy to route tasks either to capacity-limited agents or to the unconstrained model, while ensuring that the constrained agents do not exceed their target capacities.

We evaluate the extension using the \textbf{Bank} dataset and the same two constrained agents as in Section~\ref{subsec:main_results}. We add one unconstrained ``free agent'' with a marginal error rate $0.30$. Results are averaged over 50 runs with $\eta=0.5$. Figure~\ref{fig:plus_one_agent} reports the average error rate as the capacity of Agent 1 varies from 0 to 1. All contextual policies outperform both the always querying free-agent baseline and the two-agent non-contextual random baseline. This shows that the unconstrained agent can improve performance by handling tasks for which the capacity-constrained agents have low adjusted scores (i.e., $\mu_{a,t}(x_t)-\eta Q_{t,a}$).

Unlike the two-agent setting, the endpoints do not correspond to the marginal error rate of a single constrained agent. When one constrained agent has zero capacity, the policy still chooses between the other constrained agent and the free agent. The policies also do not converge to the same endpoint performance, since different learning rules estimate agent expertise differently.
\begin{figure}[htbp]
    \centering
    \includegraphics[width=0.5\linewidth]{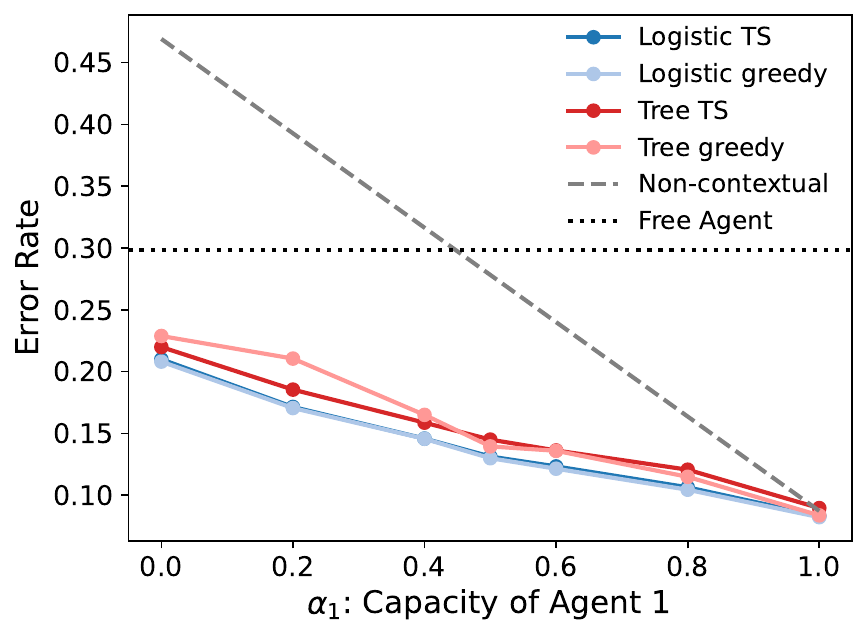}
    \caption{Extension with an unconstrained ``free agent'' and two capacity-constrained agents on the \textbf{Bank} dataset. The figure reports average error rates over 50 runs as the capacity of Agent 1 varies.}
\label{fig:plus_one_agent}
\end{figure}

\subsection{Additional Results: Mini-Batch Setting}
\label{app:mini-batch}

The mini-batch setting differs from the main online setting only in how assignments and updates are grouped. Instead of assigning one task at a time, tasks are processed in batches of size $B$. For each batch, we compute estimated rewards $\mu_{a,t}(x)$ for every agent-task pair. A subtle distinction from the fully online setting is that batch-level assignments must use integer counts. Given target capacities $\{\alpha_a\}$, we first determine the number of tasks assigned to each agent in the current batch. These batch-level counts are chosen to approximate the target capacities while accounting for accumulated queue imbalances from previous batches. For example, if $B=11$ and $\alpha_1=\alpha_2=0.5$, one agent must receive 6 tasks and the other 5 tasks in that batch; the virtual queue tracks this discrepancy so that future batches can compensate.

Conditional on these batch-level counts, we solve the assignment problem using a min-cost max-flow (MCMF) algorithm with adjusted scores $\mu_{a,t}(x)-\eta Q_{t,a}$. The MCMF step assigns tasks to agents to maximize predicted reward subject to the hard batch-level count constraints. Because $Q_{t,a}$ is constant across tasks within a batch for a fixed agent, the queue term does not affect the within-batch matching conditional on fixed counts. Its main role is to adjust future batch counts when exact proportional allocation is not possible within a batch.

After the batch is assigned, rewards are observed, the reward models are updated using the observed outcomes, and the virtual queues are updated based on the number of tasks assigned to each agent in the batch:\[Q_{t+B,a}=\left[Q_{t,a}+N_{t,a}-B\alpha_a\right]_+,\]where $N_{t,a}$ is the number of tasks assigned to agent $a$ in the batch. Thus, capacity is enforced as a hard constraint within each batch through integer assignment counts, while the queue maintains long-run capacity balance across batches.

In addition to the tree-based greedy results reported in Figure~\ref{fig:mini_batch}, we report the corresponding logistic greedy results in Figure~\ref{appfig:mini_batch_logistic}. 
The patterns are consistent: mini-batch assignment with $B=100$ improves over the fully online setting by enabling joint allocation over multiple tasks. Varying the batch size again reveals a U-shaped pattern: very small batches provide frequent updates but less stable capacity allocation, while very large batches delay feedback and slow learning. Intermediate batch sizes achieve the lowest error. These improvements, however, come at the cost of increased latency.

\begin{figure}[htbp]
    \centering
    \begin{subfigure}{0.43\linewidth}
        \centering
        \includegraphics[width=\linewidth]{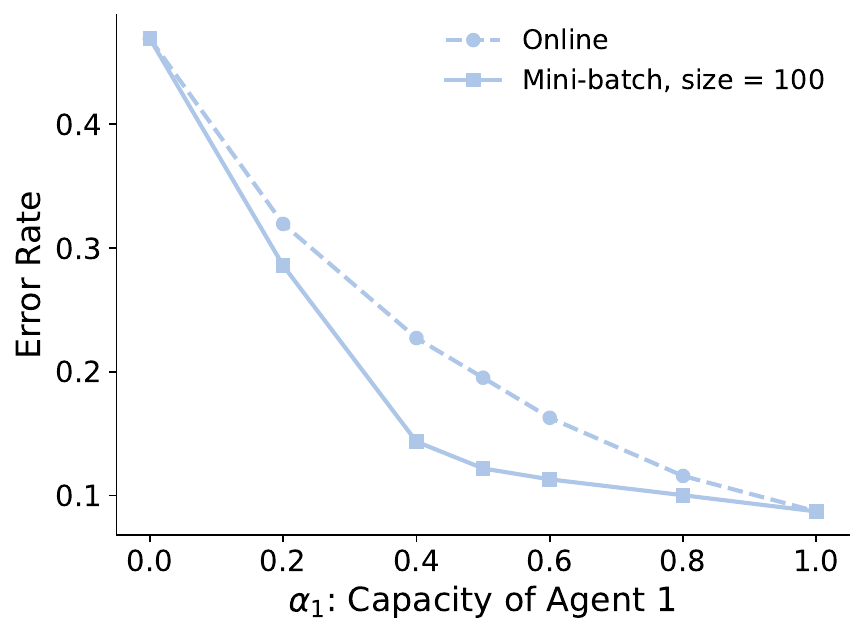}
        \caption{Online vs mini-batch assignment with batch size $N_B=100$. Error rates are averaged over 50 runs as the capacity of Agent 1 varies.}
    \end{subfigure}
    \hfill
    \begin{subfigure}{0.43\linewidth}
        \centering
        \includegraphics[width=\linewidth]{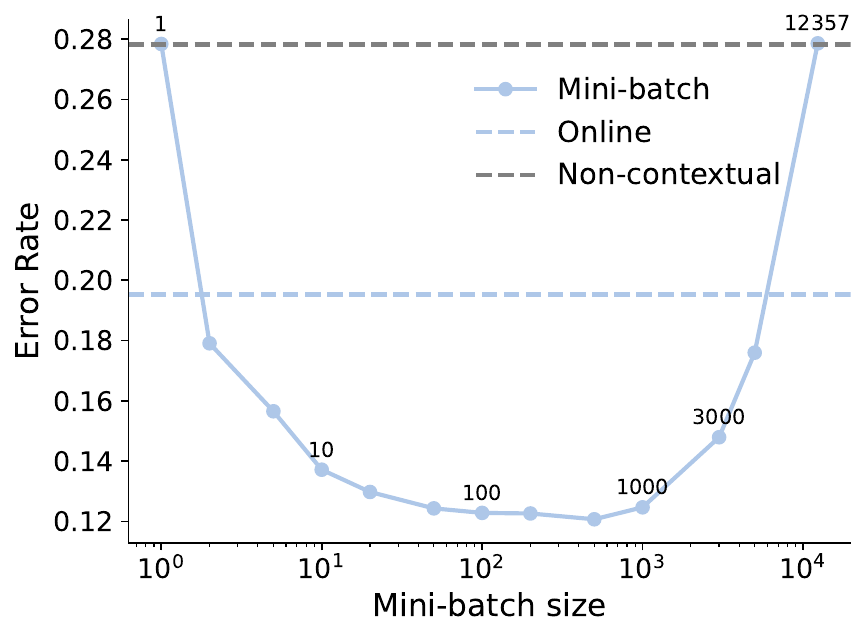}
        \caption{Effect of batch size when Agents 1 and 2 have equal capacities, $\alpha_1=\alpha_2=0.5$. Average error rates are computed over 10 runs. }
    \end{subfigure}

    \caption{Mini-batch assignment on the \textbf{Bank} dataset using the logistic greedy policy. }
    \label{appfig:mini_batch_logistic}
\end{figure}

\subsection{Additional  Results: Effect of the Queue Penalty Parameter \texorpdfstring{$\eta$}{eta}}
\label{app:diff_eta}
As discussed in Section~\ref{sec:learning_policy} and Appendix~\ref{app:capacity_constraints}, the queue penalty parameter $\eta$ controls the trade-off between reward maximization and capacity enforcement. The online policy selects
\[
a_t = \arg\max_{a \in \{1,\dots,A\}} \left\{ \mu_{a,t}(x_t) - \eta Q_{t,a} \right\},
\]
as in Equation~\eqref{eq:agent_select_w_cap}, where $\mu_{a,t}(x_t)$ is the current estimate of agent $a$'s expected reward for context $x_t$ and $Q_{t,a}$ is the virtual queue tracking capacity pressure. Smaller values of $\eta$ place more weight on estimated expertise, while larger values place more weight on satisfying the target capacities.

For all experiments in Section~\ref{sec:exp}, we fix $\eta=0.5$. Here, we examine the sensitivity of performance to this choice. We use the \textbf{Camelyon17} setting from Section~\ref{subsec:main_results}, vary the capacity of Agent 1, and compare tree-based greedy policies with different values of $\eta$ against the non-contextual random baseline. Results are averaged over 100 runs. Figure~\ref{fig:diff_eta} shows the results. As expected, smaller values of $\eta$ lead to lower error rates because the assignment rule places more weight on estimated agent expertise and allows short-run deviations from the target capacities. When $\eta=0$, the policy converges to the unconstrained contextual policy, selecting agents purely based on estimated reward. As $\eta$ increases, the queue penalty plays a larger role in assignment decisions. For large values such as $\eta=5$, the capacity term dominates, and performance approaches the non-contextual random baseline because assignments are driven primarily by capacity balancing rather than contextual expertise.

Overall, $\eta$ determines how strongly the decision-maker prioritizes capacity enforcement relative to predictive performance. The appropriate choice depends on the application: settings with strict workload or fairness requirements may prefer larger $\eta$, while settings that tolerate short-run capacity deviations may benefit from smaller $\eta$.

\begin{figure}[htbp]
    \centering
    \includegraphics[width=0.60\linewidth]{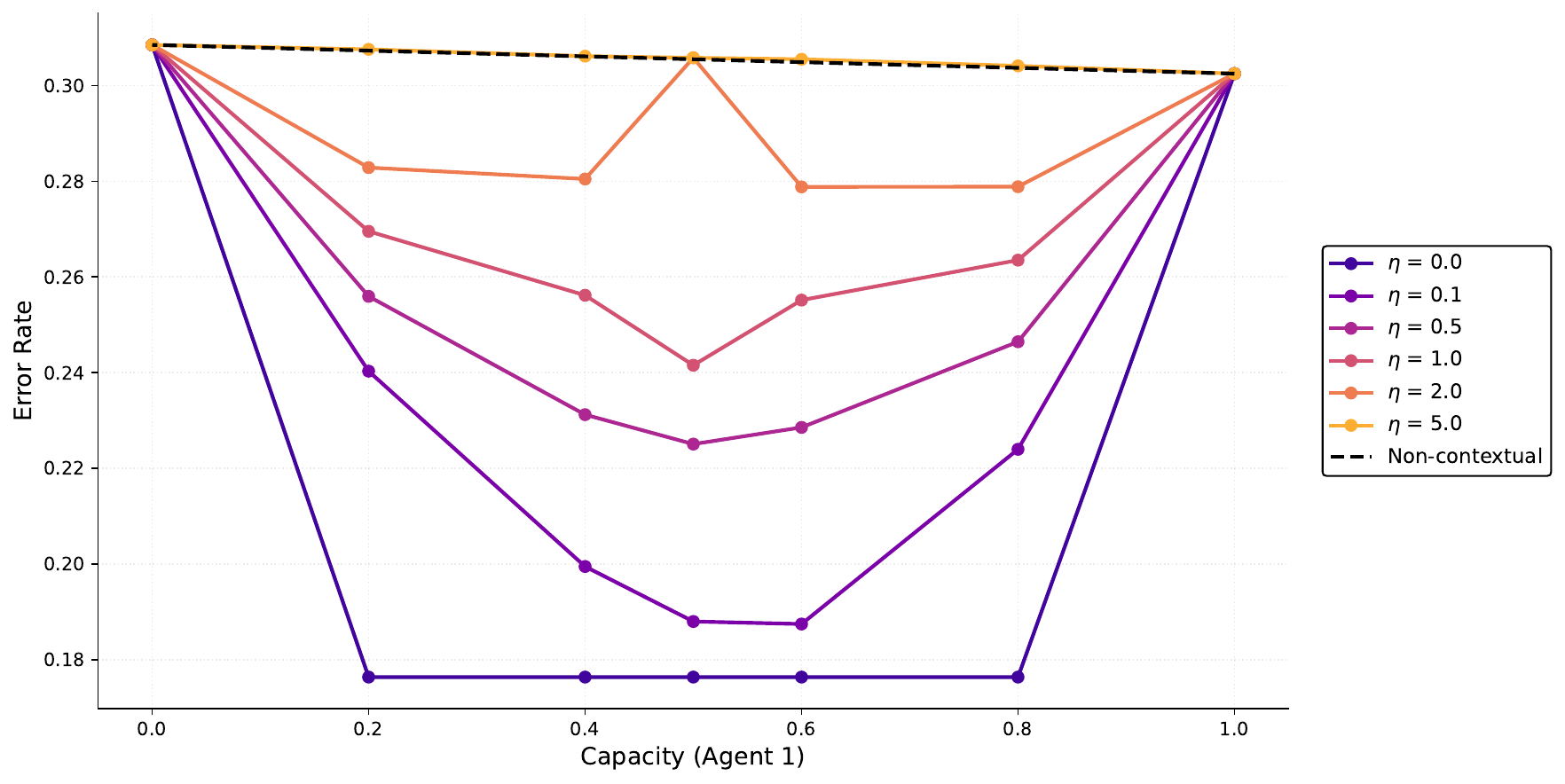}
    \caption{Effect of the queue penalty parameter $\eta$ in the \textbf{Camelyon} dataset (described in Section \ref{subsec:main_results}. Smaller $\eta$ prioritizes reward maximization and approaches the unconstrained contextual policy, while larger $\eta$ enforces capacity more aggressively and approaches the non-contextual random baseline.}
    \label{fig:diff_eta}
\end{figure}


\newpage
\vskip 0.2in
\bibliography{reference}

\end{document}